\newcommand{\F}{\vspace*{\smallskipamount}}
\newcommand{\FFF}{\vspace*{\bigskipamount}}
\newcommand{\BBB}{\vspace*{-\bigskipamount}}
\newcommand{\cA}{{\mathcal A}}
\newcommand{\cF}{{\mathcal F}}
\newcommand{\cO}{{\mathcal O}}
\newcommand{\cP}{{\mathcal P}}
\newcommand{\cS}{{\mathcal S}}
\newcommand{\cT}{{\mathcal T}}
\newcommand{\Paragraph}[1]{\BBB\paragraph{#1}}
\newcommand{\remove}[1]{}
\newcommand{\lbl}{\log(128\,b\log n)}
\newcommand{\llbl}{\log(128b\log n)}
\newcommand{\g}{{\gamma}}
\newcommand{\qed}{\hfill $\square$ \smallskip}
\newenvironment{proof}{\noindent{\bf Proof:}}{\qed}
\def\lpbo {\omega^{\beta/b} }
\def\lpbou {\omega^{1/b} }
\def\lpbi {i^{\beta/b}}
\newlength{\pagewidth}
\newlength{\captionwidth}
\newtheorem{theorem}{Theorem}
\newtheorem{lemma}{Lemma}
\newtheorem{definition}{Definition}
\newtheorem{fact}{Fact}
\begin{document}

\baselineskip 	3ex
\parskip 		1ex

\title{Scalable  Wake-up of Multi-Channel Single-Hop Radio Networks \footnotemark[1] 
				\vfill}

\author{	Bogdan S. Chlebus  	\footnotemark[2] \and
    		Gianluca De Marco \footnotemark[3] \and
		Dariusz R. Kowalski \footnotemark[4]}

\footnotetext[1]{This paper was published in a preliminary form as~\cite{ChlebusDK-OPODIS14} and in its final form as~\cite{ChlebusDK16}.}
		
\footnotetext[2]{Department of Computer Science and Engineering, 
                  University of Colorado Denver, Denver, Colorado 80217, USA.
                  Work supported by the National Science Foundation under Grant No. 1016847.}

\footnotetext[3]{Dipartimento di Informatica,
                  Universit\`a degli Studi di Salerno,
                  Fisciano, 84084 Salerno, Italy.}

\footnotetext[4]{Department of Computer Science,
                	University of Liverpool,
               	Liverpool L69 3BX, United Kingdom.}
		
\date{}

\maketitle

\vfill

\begin{abstract}
We consider  single-hop radio networks with multiple channels as a model of wireless networks. 
There are $n$ stations connected to $b$ radio channels that do not provide collision detection.
A station uses all the channels concurrently and independently. 
Some $k$ stations may become active spontaneously  at arbitrary times. 
The goal is to wake up the network, which occurs when all the stations hear a successful transmission on some channel.
Duration of a waking-up execution is measured starting from the first spontaneous activation.
We present a deterministic algorithm that wakes up a network in $\cO(k\log^{1/b} k\log n)$ time, where $k$ is unknown.
We give a deterministic scalable algorithm for the special case when $b>d \log \log n$, for some constant $d>1$, which wakes up a network in $\cO(\frac{k}{b}\log n\log(b\log n))$ time, with $k$ unknown.
This algorithm misses time optimality by at most a  factor of $\cO(\log n(\log b +\log\log n))$, because any deterministic algorithm requires $\Omega(\frac{k}{b}\log \frac{n}{k})$ time.
We give a randomized algorithm that wakes up a network within $\cO(k^{1/b}\ln \frac{1}{\epsilon})$ rounds with a probability that is at least $1-\epsilon$, for any $0<\epsilon<1$, where $k$ is known.
We also consider a model of jamming, in which each channel in any round may be jammed to prevent a successful transmission, which happens with some known parameter probability~$p$, independently across all channels and rounds.
For this model, we give two deterministic algorithms for unknown~$k$: one wakes up a network in time $\cO(\log^{-1}(\frac{1}{p})\, k\log n\log^{1/b} k)$,  and the other in  time $\cO(\log^{-1}(\frac{1}{p}) \, \frac{k}{b} \log n\log(b\log n))$ when the inequality $b>\log(128b\log n)$ holds, both with probabilities that are at least $1-1/\mbox{poly}(n)$.

\vfill

\noindent
\textbf{Keywords:}
multiple access channel,   
radio network,
multi channel,
wake-up, 
synchronization,
deterministic algorithm, 
randomized algorithm,
distributed algorithm.
\end{abstract}

\vfill

\thispagestyle{empty}

\setcounter{page}{0}

\newpage

\section{Introduction}

We consider wireless networks organized as a group of stations connected to a number of  channels.
Each channel provides the functionality of a single-hop radio network.
A station can use any of these channels to communicate directly and concurrently with all the stations.

A restriction often assumed about such networks is that a station can connect to at most one channel at a time for either transmitting or listening.
We depart from this constraint and consider an apparently stronger model in which a station can use all the available channels simultaneously and independently from each other, for instance, some for transmitting and others for listening.
On the other hand, we do not assume collision detection on any channel.

The algorithmic problem we consider is to wake up such a network.
Initially, all the stations are dormant but connected and passively listening to all channels.
Some stations become active spontaneously and want the whole network to be activated and synchronized.
The first successful transmission on any channel suffices to accomplish this goal.

The algorithms we develop are oblivious in the sense that actions of stations are scheduled in advance.
Deterministic oblivious algorithms are determined by decisions for each station when to transmit on each channel and when not. 
Randomized oblivious algorithms are determined by the probabilities for each station and each channel if to transmit on the channel in a round.

We use the following parameters to characterize a multi-channel single-hop radio network.
The number of stations is denoted by $n$ and the number of shared channels by~$b$.
All stations know~$b$.
At most $k$ stations become active spontaneously at arbitrary times and join execution with the goal to wake up the network.
The parameter $k$ is used to characterize scalability of wake-up algorithms, along with the number of channels~$b$.

\Paragraph{Our results.}

We give  randomized and deterministic oblivious algorithms to wake up a multi-channel single-hop radio network.
One of the algorithms scales well with both the number of stations $k$ that may be activated spontaneously and with the number of channels~$b$.

We develop two deterministic algorithms for the case of unknown~$k$.
Our general deterministic algorithm wakes up a network in $\cO(k\log^{1/b} k\log n)$  rounds.
We also give a deterministic algorithm which performs well when sufficiently many channels are available: it wakes up a network in $\cO(\frac{k}{b}\log n\log(b\log n))$ rounds when the numbers of nodes~$n$ and channels~$b$ satisfy the inequality $b>\log(128b\log n)$.
An algorithm of time performance $\cO(\frac{k}{b}\log n\log(b\log n))$, like this one, misses time optimality by at most a  factor of $\log n(\log b+\log\log n)$, because $\Omega(\frac{k}{b}\log \frac{n}{k})$ rounds are required by any deterministic algorithm.
This algorithm is best among those we develop, with respect to scalability with parameters $k$ and~$b$.

We give a randomized algorithm that wakes up a network within $\cO(k^{1/b}\ln \frac{1}{\epsilon})$ rounds with a probability that is at least $1-\epsilon$, for any $0<\epsilon<1$, for  known~$k$.
This algorithm demonstrates a separation between time performance of fastest deterministic algorithms and randomized ones that can use the knowledge of~$k$.

We also consider a model of jamming, in which each channel in any round may be jammed to prevent a successful transmission, which happens with some known probability~$p$, treated as a parameter, independently across all channels and rounds.
For this model, we give two  deterministic algorithms.
One of them wakes up the network in time $\cO(\log^{-1}\frac{1}{p}\, k\log n\log^{1/b} k)$ with a probability that is at least $1-1/\mbox{poly}(n)$.
Another algorithm is designed for the case when the inequality $b>\log(128b\log n)$ holds; in such networks the algorithm operates in  time $\cO(\log^{-1}(\frac{1}{p}) \, \frac{k}{b} \log n\log(b\log n))$ with a large probability.

\Paragraph{Previous work.}

G\k asieniec et al.~\cite{GasieniecPP-JDM01} gave a deterministic oblivious algorithm to wake up a single-hop single-channel radio network in time $\cO(n\log^2 n)$, where $n$ is known and any number of stations may be activated spontaneously.
Our deterministic oblivious algorithms have time performance bounds expressed by formulas in which the three parameters $n$, $b$, and $k$ appear, of which $k$ is unknown while $n$ and $b$ are known.
Observe that if we substitute $k=n$ and $b=1$ in the upper bound $\cO(k\log^{1/b} k\log n)$, which holds for the general deterministic oblivious algorithm, then what is obtained is $\cO(n\log^2 n)$.
Our algorithm, when applied in networks with one channel, has the advantage of scaling with the unknown number $k$ of stations that are activated spontaneously, and provides an asymptotic improvement over the upper bound $\cO(n\log^2 n)$ even for just two channels.

Jurdzi\'nski and Stachowiak~\cite{JurdzinskiS05} gave two randomized algorithms to wake up a multiple access channel.
One of them works in  time $\cO(\log^2 n)$ with high probability, when performance is optimized with respect to~$n$, and another works in time  $\cO(k)$ with high probability, when performance is  optimized with respect to~$k$.
Our randomized algorithm for multi-channel networks has performance sub-linear in~$k$  for even just two channels. 

Koml\'os and Greenberg~\cite{KomlosG-TIT85} showed how to resolve conflict for access to one channel among any of~$k$ stations in time $\cO(k + k\log\frac{n}{k})$, when the stations begin an execution in the same round.
This can be compared to two of our results for the apparently more challenging problem of waking up a network, albeit equipped with multiple channels.
First, our  general deterministic wake-up algorithm runs in   time $\cO(k\log^{1/b} k\log n)$.
Second, when the number of channels satisfies $b=\Omega(\log\log n)$ then another of our algorithms wakes up a network  in time~$\cO(k\log n)$.

\Paragraph{Related work.}

Shi et al.~\cite{ShiHYWL12}  considered the model of a multi-channel network, where there are $n$ nodes connected to $n$ channels, each channel being a single-hop radio network.
A node can use all the available channels concurrently for transmitting and/or receiving transmissions.
They studied the information-exchange problem, in which some $\ell$ nodes start with a rumor each  and the goal is to disseminate all rumors across all stations.
They gave an information-exchange algorithm of time performance $\cO(\log \ell\log\log \ell)$.

The work reported  by Shi et al.~\cite{ShiHYWL12} was the only one, that we are familiar with, to use the model in which nodes can use all the available channels concurrently and independently.
All the other work on algorithms for multi-channel single-hop radio networks used the model in which a node has to choose a channel per round to participate in communication only through this particular channel, either as a listener or as transmitter; variants of this model with adversarial disruptions of channels were also considered.
Next we review work done for this very multi-channel model, in which a station can use at most one channel at a time for communication.

Dolev et al.~\cite{DolevGGN-DISC07} studied a parametrized variant of gossip for multi-channel radio networks.
They gave oblivious deterministic algorithms for an adversarial setting in which a malicious adversary can disrupt one channel per round.
Daum et al.~\cite{DaumGKN12} considered leader election and Dolev et al.~\cite{DolevGGKN-PODC09} gave algorithms to synchronize a network, both papers about an adversarial setting in which the adversary can disrupt a number of channels in each round, this number treated as a parameter in performance bounds.

Information exchange has been investigated extensively for multi-channel wireless networks.
The problem is about some $\ell$ nodes initialized with a rumor each and the goal is either to disseminate the rumors across the whole network or, when the communication environment is prone to failures,  to have each node learn as many rumors as possible.
Gilbert et al.~\cite{GilbertGKN-INFOCOM09} gave a randomized algorithm for the scenario when an adversary can disrupt a number of channels per round, this number being an additional parameter in performance bounds.
Holzer et al. in~\cite{HolzerPSW11} and~\cite{HolzerLPW17} gave deterministic and randomized algorithms to accomplish the information-exchange task in time $\cO(\ell)$, for $\ell$ rumors and for suitable numbers of channels that make this achievable.
This time bound $\cO(\ell)$ is optimal when multiple rumors cannot be combined into compound  messages.
Wang et al.~\cite{WangWYY14} considered information-exchange in a model when rumors can be combined into compound messages and collision detection is available.
They gave an algorithm of time performance $\cO(\ell/b +n\log^2n)$, for $\ell$ rumors and $b$ channels.
Ning et al.~\cite{YuNZJWLF17} gave a randomized algorithm for the model with collision detection that completes information exchange of $\ell$ rumors held by $\ell$ nodes in time $\cO(\ell/b + b\log n)$ with a high probability, where $n$ is not known.

A multi-channel single-hop network is a generalization of a multiple-access channel, which consists of just one channel.
For recent work on algorithms for multiple-access channels see~\cite{AnantharamuC15, AnantharamuCKR-JCSS19,AnantharamuCR-TCS17, AntaMM13,BienkowskiKKK-STACS10, ChlebusKR-DC09, ChlebusKR-TALG12, CzyzowiczGKP11,DeMarcoK15,JurdzinskiS15,Kowalski05}.

The problem of waking up a radio network was first investigated by G\k asieniec et al.~\cite{GasieniecPP-JDM01} in the case of multiple-access channels, see \cite{DeMarcoK17, jDMK13, DeMarcoPS07, JurdzinskiS05} for more on related work.
A broadcast from a synchronized start in a radio network was considered in~\cite{ChlebusGGPR02,ChrobakGR-JA02,ClementiMS03,CzumajR-JA06,DeMarco-soda08,DeMarco-JC10,KowalskiP-DC05}.
The general problem of waking up a multi-hop radio network was studied in~\cite{ChlebusGKR-ICALP05,ChlebusK-PODC04,ChrobakGK-SICOMP07}.

A lower bound for a multiple-access channel was given by Greenberg and Winograd~\cite{GreenbergW-JACM85}.
Lower bounds for multi-hop radio networks we proved by Alon et al.~\cite{AlonBLP91}, Clementi et al.~\cite{ClementiMS03},  Farach-Colton et al.~\cite{Farach-ColtonFM06} and Kushilevitz and Mansour~\cite{KushilevitzM-SICOMP98}.

Ad-hoc multi-hop multi-channel networks were studied by Alonso et al.~\cite{AlonsoKSWW03}, Daum et al. in~\cite{DaumGGKN13} and \cite{DaumKN12}, Dolev et al.~\cite{DolevGKN11}, and So and Vaidya~\cite{SoV04}.

\Paragraph{Structure and history of this document.}

We summarize the technical preliminaries in Section~\ref{sec:technical-preliminaries}.
A lower bound on time performance of deterministic algorithms is given in Section~\ref{sec:lowerbound}.
A randomized wake-up algorithm is given in Section~\ref{sec:randomized-algorithm}.
The concept of a generic deterministic oblivious algorithm is discussed in Section~\ref{sec:deterministic-oblivious-algorithms}.
Instantiations of such a generic algorithm are presented in Section~\ref{sec:general-deterministic-algorithm}, and the next Section~\ref{sec:large} discusses specialized instantiations of the generic deterministic algorithm when the number of channels is sufficiently large with respect to the number of stations.

The results of this paper appeared in a preliminary form in~\cite{ChlebusDK-OPODIS14}.

\section{Technical Preliminaries}

\label{sec:technical-preliminaries}

The model of \emph{multi-channel single-hop radio network} is defined as follows.
There are $n$ nodes attached to a spectrum of $b$ frequencies.
Each frequency determines a multiple access channel.
We use the term ``station'' and ``node'' interchangeably.
The set of all stations is denoted by~$V$, where $|V|=n$.
Each station has a unique name assigned to it, which is an integer in $[1,n]$.

All the available channels operate concurrently and independently from each other.
Each channel has a unique identifier, which is an integer in the interval $[1,b]$.
A station identifies a channel by its identifier, which is the same for all stations.
A station can transmit on any set of channels at any time.
A station obtains the respective feedback from each channel, separately and concurrently among the channels.

\Paragraph{The semantics of channels.}

We say that a station \emph{hears} a message on a channel when the station successfully receives a message transmitted on this channel. 
A channel is \emph{silent} in a time interval when no station transmits on this channel in this time interval.
When more than one stations transmit on a channel, such that their transmissions overlap, then we say that  a \emph{collision} occurs on this channel during the time of overlap.
We say that a channel is equipped with \emph{collision detection} when feedback from the channel allows to distinguish between the channel being silent and a collision occurring on the channel.
When stations receive the same feedback from a channel when it is silent and when a collision occurs on this channel then the channel is said to be \emph{without collision detection}.

When a station transmits on some channel and no collision occurs on this channel during such a transmission  then each station hears the transmitted message on this channel.
When a station transmits a message and a collision occurs during the transmission on this channel  then no station hears this transmitted message.
There could be a collision on one channel and at the same time a message may be heard on some other channel.
There is no collision detection on any channel.

\Paragraph{Synchrony.}

Transmissions on all channels are synchronized.
This means that an execution of an algorithm is partitioned into \emph{rounds}.
Rounds are understood to be of equal length.
Each station has its private clock which is ticking at the rate of rounds. 
Rounds begin and end at the same time for all stations.
When we refer to a round number  then this means the indication  of some station's private clock, while this station is understood from context.

Messages are scaled to duration of rounds, so that transmitting a message takes a whole round.
Two transmissions overlap in time precisely when they are performed in the same round.
This means that two messages result in a collision when and only when they are transmitted on the same channel and in the same round.

\Paragraph{Spontaneous activations.}

Initially, all stations are \emph{passive}, in that they do not execute any communication algorithm, and in particular do not transmit any messages on any channel.
Passive stations listen to all channels all the time, in that when a message is heard on a channel then all passive stations hear it too.

At an arbitrary point in time, some stations become \emph{activated} spontaneously and afterwards they are \emph{active}. 
Passive stations may keep getting activated spontaneously after the round of the first activations.
A specific scenario of timings of certain stations being activated is called an \emph{activation pattern}.
An activated station resets its private clock to zero at the round of activation.
When a station becomes active, it starts from the first round indicated by its private clock to execute a communication algorithm.

Time, as measured by an external observer, is called \emph{global}.
Its units are of the same duration as rounds.
A unit of the global time is called a \emph{time step}. 
The first round of a spontaneous activation of some station is considered as  the first time step of the global time.
The time step in which a station~$u$ becomes activated spontaneously is denoted by~$\sigma_u$.
The set of  stations that are active by time step~$t$ is denoted by~$W(t)$.

\Paragraph{The task of waking up a network.}

The algorithms we consider have as their goal to wake up the network that is executing it.
A network gets \emph{woken up} in the first round when some active station transmits on some channel as the only station transmitting in this round on this particular channel.
This moment is understood as resulting in all passive stations receiving a signal to ``wake up'' and next proceed with executing a predetermined communication algorithm.
The round of waking up a network can be used to synchronize local clocks so that they begin to indicate the same number of a time step.

Time performance of wake-up algorithms is measured as the number of rounds counted from the first spontaneous activation until the round of the first message heard on the network.
Performance bounds of wake-up algorithms in this paper employ the following three parameters: $n$, $b$, and $k$, which are natural numbers such that $1\le k\le n$.
Here $n$ is the number of stations, $b$ is the number of channels, and $k$ denotes an upper bound on the number of stations that may get activated spontaneously in an execution.
Given the parameters $n$, $k$, and $b$, they determine what can be called the \emph{$(n,k,b)$-wake-up problem}: find an algorithm that minimizes the time of waking up a network with $n$ nodes and $b$ channels when up to $k$ stations can be activated spontaneously.

We consider deterministic and randomized algorithms whose goal is to wake up a network.
They are \emph{oblivious} in that the actions of stations are determined in advance; such  a determination is given as the probabilities of actions in the case of randomized algorithms.

A parameter of a system or executions is \emph{known} when it can be used in codes of algorithms.
For an instance of an $(n,k,b)$-wake-up problem, the number of channels $b$ is assumed to be known, which is natural, since stations need to know channels in order to use them.
Regarding the other parameters $n$ and $k$ in this paper, the assumptions are as follows.
If $n$ is known then $k$ is not  assumed as known, which is the case of deterministic algorithms. 
If $k$ is known then $n$ is not assumed to be known, which is the case of a randomized algorithm.

\section{A Lower Bound}

\label{sec:lowerbound}

We present a lower bound on time performance of any deterministic algorithm for the $(n,k,b)$-wake-up problem. 

A family~$\cF$ of subsets of~$[n]$ is said to be \emph{$(n,k)$-selective} when for any subset $A\subseteq [n]$ of $k$ elements there exists a set $b\in\cF$ such that $A\cap B$ is a singleton set.
There is a straightforward correspondence between $(n,k)$-selective families and deterministic oblivious wake-up protocols on a multiple-access channel with $n$ stations when up to $k$ stations are activated spontaneously.

Clementi et al.~\cite{ClementiMS03} showed that $\Omega(k\log\frac{n}{k})$ is a lower bound  on time needed to wake-up  a single-channel single-hop radio network with $n$ nodes, when some $k$ nodes are activated spontaneously.
More precisely, Clementi et al.~\cite{ClementiMS03} showed that any $(n,k)$-selective family needs to have at least $\frac{k}{24}\lg\frac{n}{k}$ elements, for $k$ such that $2 < k\le\frac{n}{64}$. 

Wake-up protocols for our model of multi-channel networks can also be interpreted as $(n,k)$-selective families.
An additional aspect is that we can apply $b$ sets from the family simultaneously as concurrent transmissions on different channels.
This directly implies that $\frac{1}{b}\cdot \frac{k}{24}\lg\frac{n}{k}$ time is required of a wake-up protocol for a $b$-channel network of $n$ nodes, for $k$ such that $2< k\le \frac{n}{64}$, by a lower bound on the size of selective families given in~\cite{ClementiMS03}.

In the remaining part of this section, we demonstrate a lower bound on time of wake-up for multi-channel networks.
The arguments we expound follow the main ideas of the proof of a lower bound given by Clementi et al.~\cite{ClementiMS03} for one channel; in particular, we also refer to properties of intersection-free families proved by Frankl and~F{\"u}redi~\cite{FranklF85}.
There are two goals for including this Section, rather than simply accepting $\frac{1}{b}\cdot \frac{k}{24}\lg\frac{n}{k}$ as a lower bound.
First, proving Theorem~\ref{thm:lower-bound} makes the paper self-contained.
Second, we state the lower bound in Theorem~\ref{thm:lower-bound} in a form that, first, improves the key involved constants, by obtaining $\frac{1}{b}\cdot \frac{k}{4}\lg\frac{n}{k}+ \cO(\frac{k}{b})$ instead of $\frac{1}{b}\cdot \frac{k}{24}\lg\frac{n}{k}$,  and, second,  relaxes the restriction $k\le \frac{n}{64}$ to a general case $k\le n$.

We define a \emph{query} to be a set of ordered pairs $(x,\beta)$, for $x \in V$ and $1\leq \beta\leq b$. 
An interpretation of a pair $(x,\beta)\in Q$, for a query $Q$, is that station $x$ is to transmit on channel~$\beta$ at the time step assigned for the query.  
In this section, a deterministic oblivious algorithm $\cA$ is represented as a sequence of queries $\cA = \{Q_1,\ldots, Q_t\}$. 
The index $i$ of a query $Q_i$ in such a sequence $\cA$ is interpreted as the time step assigned for the query.
We use the notation 
\[
Q_{i,\beta} = \{ x\in V : (x,\beta) \in Q_i \}
\ , 
\]
for a query~$Q_i$. 
This represents the subset of all stations that transmit on channel~$\beta$ in time step~$i$.

We use the Iverson's bracket $[\cP]$, where  $\cP$ is a logical statement, that could be either true or false, defined as follows: $[\cP]=1$ if $\cP$ is true and $[\cP]=0$ if $\cP$ is false.
We denote by $\cF_{k}^n$ the family of sets with exactly~$k$ elements out of $n$ possible elements, interpreted as $k$-sets of stations taken from among all $n$ stations.

\begin{lemma}
\label{pigeonhole}

Let $\cA =\{Q_1,Q_2,\ldots,Q_t\}$ be a sequence of queries representing an algorithm. 
There exists a  sub-family $\cS \subseteq \cF_{k}^n$ with at least $|\cF_{k}^n|/2^{bt}$ elements such that any two sets $A$ and $B$ in~$\cS$ satisfy 
\begin{equation}
\label{eqn:equivalence}
[|A\cap Q_{i,\beta}| \text{\rm\   is odd}\,] = [|B\cap Q_{i,\beta}| \text{\rm\  is odd}\,] 
\ ,
\end{equation}
for all $i$ and $\beta$ such that $1\le i \le t$ and $1\le \beta\le b$.
\end{lemma}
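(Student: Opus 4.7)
The plan is a direct pigeonhole argument on a parity coloring of $\cF_{k}^{n}$. First I would introduce, for each pair $(i,\beta)$ with $1\le i\le t$ and $1\le\beta\le b$, a binary invariant $\chi_{i,\beta}\colon\cF_{k}^{n}\to\{0,1\}$ defined by $\chi_{i,\beta}(A)=|A\cap Q_{i,\beta}|\bmod 2$. Assembling these $bt$ invariants into a single vector-valued map gives a coloring $\chi\colon\cF_{k}^{n}\to\{0,1\}^{bt}$, whose image lies in a set of size at most $2^{bt}$.

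Next I would apply the pigeonhole principle to $\chi$: since $|\cF_{k}^{n}|$ elements are distributed among at most $2^{bt}$ color classes, at least one class $\cS$ has cardinality at least $|\cF_{k}^{n}|/2^{bt}$. Any two $A,B\in\cS$ have the same $\chi$-value by construction, so $\chi_{i,\beta}(A)=\chi_{i,\beta}(B)$ for every $i$ and every $\beta$ in the valid ranges, which is precisely the equivalence~(\ref{eqn:equivalence}) claimed in the lemma.

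There is essentially no obstacle here: the lemma is a packaging statement that isolates the pigeonhole step needed in the forthcoming lower-bound proof, before the harder combinatorial input (the Frankl--F\"uredi bound on intersection-free families referenced in the surrounding text) is brought to bear to show that such a monochromatic $\cS$ cannot be too large unless one of the queries $Q_{i,\beta}$ has already singled out some $k$-set. Consequently the only care required is bookkeeping: count the coordinates of $\chi$ correctly as $bt$ (not $b+t$ or $\binom{bt}{2}$), and make sure the bound is stated as $|\cF_{k}^{n}|/2^{bt}$ rather than a floor/ceiling that might be off by one, which is harmless since we only need a lower bound on $|\cS|$.
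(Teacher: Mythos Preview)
Your proposal is correct and takes essentially the same approach as the paper: both argue by pigeonhole over the at most $2^{bt}$ parity classes induced by the sets $Q_{i,\beta}$. The paper phrases the same count as an induction on $t$ (peeling off one query at a time and applying pigeonhole over the $2^b$ parity classes it creates), whereas you apply pigeonhole once to the full coloring $\chi:\cF_k^n\to\{0,1\}^{bt}$; the two arguments are equivalent, yours being the slightly more compact packaging.
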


\begin{proof}
Two sets $A$ and $B$ in $\cF_{k}^n$ are said to be \emph{$i$-similar} when the equality~\eqref{eqn:equivalence} holds for all $\beta$ such that $1\le \beta\le b$.
The relation of $i$-similarity is an equivalence relation on~$\cF_{k}^n$.

The proof is by induction on~$t$.
The base of induction is obtained by taking an equivalence class of $1$-similarity that is of a largest size.
This size is at least  $|\cF_{k}^n|/2^{b}$, by the pigeonhole principle.

For the inductive step, assume that the claim holds for $i$ such that $0\le i \le t$, that is, there exists a  sub-family $\cS \subseteq \cF_{k}^n$ with at least $|\cF_{k}^n|/2^{bt}$ elements such that any two sets $A$ and $B$ in~$\cS$ satisfy the identity~\eqref{eqn:equivalence}, for all $i$ and $\beta$ such that $1\le i \le t$ and $1\le \beta\le b$.
Consider the relation of $(t+1)$-similarity determined on~$\cS$ by a query~$Q_{t+1}$.
There are at most $2^b$ nonempty equivalence classes of this relation.
One of them has at least $ |\cS(i)|/2^b$ elements, by the pigeonhole principle. 
By the inductive assumption, the size of this equivalence class is at least $|\cF_{k}^n|/2^{b(t+1)}$.
\end{proof}

For $\lambda \leq \kappa \leq n$, a family $\cF \subseteq \cF_{\kappa}^n$ is said to be 
\emph{$(n, \kappa, \lambda)$-intersection free} if $|F_1 \cap F_2 | \not= \lambda$ for every $F_1$ and $F_2$ in $ \cF_{\kappa}^n$.

\begin{fact}[\cite{FranklF85}]
\label{fact:FF}
For any $(n, \kappa, \lambda)$-intersection free family $\cF$  the following inequality holds:
\[
 |\cF| 
 \leq 
 {n \choose \lambda} \cdot
 \frac{{ 2\kappa-\lambda-1 \choose \kappa}}{{2\kappa -\lambda -1 \choose   \lambda}}
 \ ,
\]
assuming the inequality $2\lambda + 1 \geq \kappa$  and that $\kappa - \lambda$ is a prime power.
\hfill $\square$
\end{fact}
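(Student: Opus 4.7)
The statement is the classical Frankl--F\"uredi bound on intersection-free families, so my plan is to reproduce the polynomial / linear-algebra argument that is standard for such extremal set-theory results rather than attempt something new. The target is: exhibit for each $F\in\cF$ a ``witness vector'' $v_F$ inside an explicit ambient space $V$, show that the collection $\{v_F\}_{F\in\cF}$ is linearly independent, and then conclude $|\cF|\le\dim V$ with $\dim V$ matching the claimed right-hand side.

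First I would try the coarse inclusion-matrix construction: let the coordinates be indexed by the $\lambda$-subsets $L\subseteq [n]$, and for $F\in\cF$ put $v_F(L)=g(|L\cap F|)$ for a polynomial $g$ in the intersection size chosen so that the Gram-type matrix $M_{F_1,F_2}=\sum_L v_{F_1}(L)\,v_{F_2}(L)$, which is itself a polynomial in $|F_1\cap F_2|$, becomes diagonal-dominant or outright diagonal after using the intersection-free hypothesis to exclude $|F_1\cap F_2|=\lambda$. A successful choice of $g$ then forces linear independence and yields the crude bound $|\cF|\le\binom{n}{\lambda}$. The assumption that $\kappa-\lambda$ is a prime power is almost certainly exploited here to work modulo $p=\kappa-\lambda$ in a field of that order, where a Vandermonde- or Wilson-type determinant identity certifies nonsingularity of $M$ even when the corresponding statement would fail generically over the rationals.

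To upgrade from $\binom{n}{\lambda}$ to the sharper constant with the extra factor $\binom{2\kappa-\lambda-1}{\kappa}/\binom{2\kappa-\lambda-1}{\lambda}$, I would localize the argument: restrict $L$ to range over $\lambda$-subsets contained in a fixed $(2\kappa-\lambda-1)$-element ``window'' of $[n]$ and then average or double-count over the choice of window. The extra combinatorial ratio should emerge naturally from the accounting of how many $\kappa$-sets in $\cF$ can sit ``compatibly'' with a window compared to how many $\lambda$-tests the window supplies, which is precisely $\binom{2\kappa-\lambda-1}{\kappa}/\binom{2\kappa-\lambda-1}{\lambda}$.

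The main obstacle I expect is aligning the polynomial $g$ and the window family so that the counting produces exactly this ratio rather than a slightly weaker analogue: the bound is tight on designs built from affine or projective planes of order $\kappa-\lambda$, and the extremal examples effectively dictate both the right $g$ and the right window structure. The hypothesis $2\lambda+1\ge\kappa$ in particular must enter by making $(2\kappa-\lambda-1)\ge \kappa+\lambda$ large enough to contain generic configurations, so one has to verify that the local/global double count respects this constraint. Because the Fact is used downstream only as a black box to bound the sizes of intersection-free families in the lower-bound proof for the $(n,k,b)$-wake-up problem, once the strategy was sanity-checked on a small case (say $\kappa=3$, $\lambda=1$, $\kappa-\lambda=2$) I would ultimately rely on the detailed calculation in~\cite{FranklF85}.
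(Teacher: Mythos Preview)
The paper does not prove this statement at all: it is stated as a \textbf{Fact} with a citation to~\cite{FranklF85} and closed immediately with a $\square$, so there is no in-paper proof to compare your proposal against. Your sketch is a reasonable outline of the polynomial/linear-algebra method that underlies the original Frankl--F\"uredi argument, and your own closing remark --- that you would ultimately rely on the detailed calculation in~\cite{FranklF85} --- is exactly what the paper does. For the purposes of this paper, the Fact is a black box imported from the literature, and no independent proof is expected or provided.
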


\begin{lemma}
\label{lem:binomial-simplification}
The following identity holds true
\[
 {3k/2-1 \choose k} \Big/ {3k/2-1 \choose k/2} = \frac{1}{2}
\ .
\]
for  integers $k\ge 0$ and $n\ge 0$ such that $k$ is even and $k\le 2n$.
\end{lemma}

\begin{proof}
Let $k=2m$.
It is sufficient to verify the following equation:
\[
2\cdot \binom{3m-1}{ 2m} = \binom{3m-1 }{ m}
\ .
\]
This  indeed is the case, as the following transformations
\begin{eqnarray*}
\binom{3m-1 }{ m} &=& \binom{3m-1}{2m-1}\\
&=& \frac{(3m-1)!}{m! \cdot (2m-1)!}\\
&=& \frac{2\cdot (3m-1)!}{(m-1)!\cdot (2m)!}\\
&=& 2\cdot\binom{3m-1}{2m} 
\end{eqnarray*}
provide the needed verification.
\end{proof}

We use notation $\lg x$ to denote the binary logarithm $\log_2 x$.

\begin{lemma}
\label{l:sets}

Let $\cA =\{Q_1,Q_2,\ldots,Q_t\}$ be an algorithm, where the following inequality holds
\[
t\leq \frac{k}{2b}\lg \frac{n}{k} - \frac{3k-2}{2b}
\] 
and $\frac{k}{2}$ is a prime power. 
There exist two sets $A,B\subseteq \cF_k^n$ such that the following are satisfied:
\begin{enumerate}
\item[\rm (a)] $|A \cap B| = \frac{k}{2}$, 
\item[\rm (b)] $[|A\cap Q_{i,\beta}|  \text{\rm\  is odd}\,] = [|B\cap Q_{i,\beta}|  \text{\rm\   is odd}\,]$, 
for every $i$ and $\beta$ such that $1\le i \le t$ and $1\le \beta\le b$.
\end{enumerate}
\end{lemma}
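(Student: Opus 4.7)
The plan is to combine Lemma~\ref{pigeonhole} (which produces a large equi-parity subfamily) with Fact~\ref{fact:FF} (which bounds intersection-free families), using the former to produce a candidate pool $\cS$ and the latter to force two of its elements to meet in exactly $k/2$ points.

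First, I would apply Lemma~\ref{pigeonhole} to $\cA$ to obtain a subfamily $\cS \subseteq \cF_k^n$ of size at least $\binom{n}{k}/2^{bt}$ such that every pair $A,B \in \cS$ already satisfies condition~(b) of the statement. So it only remains to find $A,B \in \cS$ with $|A \cap B| = k/2$. I would then argue by contradiction: assume no such pair exists, which means that $\cS$ is $(n,k,k/2)$-intersection-free.

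At this point I would invoke Fact~\ref{fact:FF} with $\kappa = k$ and $\lambda = k/2$. The hypotheses are met: $2\lambda + 1 = k+1 \ge k = \kappa$, and $\kappa - \lambda = k/2$ is a prime power by assumption. This yields
\[
   |\cS| \;\le\; \binom{n}{k/2}\cdot\frac{\binom{3k/2-1}{k}}{\binom{3k/2-1}{k/2}}\ .
\]
The ratio simplifies nicely because $\binom{3k/2-1}{k} = \binom{3k/2-1}{k/2-1}$, and the standard identity $\binom{m}{r-1}/\binom{m}{r} = r/(m-r+1)$ with $m = 3k/2-1$ and $r = k/2$ gives ratio $= \frac{k/2}{k} = \frac{1}{2}$. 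Hence $|\cS| \le \binom{n}{k/2}/2$.

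Finally, I would pit this upper bound against the lower bound $|\cS| \ge \binom{n}{k}/2^{bt}$ to derive $\binom{n}{k}/\binom{n}{k/2} \le 2^{bt-1}$. The identity $\binom{n}{k}/\binom{n}{k/2} = \binom{n-k/2}{k/2}/\binom{k}{k/2}$, combined with $\binom{m}{r} \ge (m/r)^r$ for $m\ge r$ and $\binom{k}{k/2} \le 2^k$, produces $\binom{n}{k}/\binom{n}{k/2} \ge (n/k)^{k/2}/2^k$. Taking binary logarithms turns this into $bt \ge \frac{k}{2}\lg\frac{n}{k} - k + 1$, which strictly exceeds the hypothesis $bt \le \frac{k}{2}\lg\frac{n}{k} - (k+1)$, a contradiction. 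The main obstacle is this last step: the hypothesis on $t$ is tight in the additive constant, so some care is needed in choosing the lower bound on $\binom{n-k/2}{k/2}/\binom{k}{k/2}$ cleanly enough to absorb the $-k$ term while preserving a strict inequality, which is precisely why the hypothesis carries the exact offset $(k+1)/b$ rather than merely $\Theta(k/b)$.
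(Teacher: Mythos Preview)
Your proposal is correct and follows essentially the same route as the paper: apply Lemma~\ref{pigeonhole} to obtain the large equi-parity family~$\cS$, then invoke Fact~\ref{fact:FF} with $\kappa=k$, $\lambda=k/2$ to bound any intersection-free subfamily by $\tfrac{1}{2}\binom{n}{k/2}$, and finish with a counting contradiction against $|\cS|\ge\binom{n}{k}/2^{bt}$. The only difference is cosmetic, in the final arithmetic: the paper bounds $\binom{n}{k}$ from below and $\binom{n}{k/2}$ from above separately, whereas you bound the ratio $\binom{n}{k}/\binom{n}{k/2}$ directly via the identity $\binom{n}{k}/\binom{n}{k/2}=\binom{n-k/2}{k/2}/\binom{k}{k/2}$ together with $\binom{k}{k/2}\le 2^k$, which is in fact a tidier way to reach the needed inequality.
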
    
\begin{proof}
By Lemma~\ref{pigeonhole}, there exists a sub-family $\cS \subseteq \cF_{k}^n$ of $|\cS|$ elements in $\cF_{k}^n$ and such that 
\begin{equation}\label{S}
 |\cS|\ge |\cF_{k}^n|/2^{bt} = {n \choose {k}} / 2^{bt}
\end{equation}
and
\[
[|A  \cap Q_{i,\beta}|  \mbox{ is odd}] = [|B\cap Q_{i,\beta}|  \mbox{ is odd}]
\ ,
\] 
for every $A,B\in\cS$, $1\le \beta\le b$ and $1\le i \le t$.
It follows that any two sets $A$ and $B$ in $\cS \subseteq \cF_{k}^n$ satisfy condition~(b).

It remains to demonstrate that there are at least two sets in $\cS$ that also satisfy condition~(a), that is, their intersection has $\frac{k}{2}$ elements.
We use Fact~\ref{fact:FF}, for $\kappa=k$ and $\lambda=\frac{k}{2}$, such that $\frac{k}{2}$ is a prime power.
It gives that any sub-family of $\cF_{k}^n$ containing sets that have pairwise intersections of size 
different from $k/2$ has  at most these many elements:
\[
{n \choose k/2} \cdot {3k/2-1 \choose k} \Big/ {3k/2-1 \choose k/2} 
\ ,
\]
which equals $\frac{1}{2}{n \choose k/2}$ by Lemma~\ref{lem:binomial-simplification}.
It follows that it is sufficient for the following inequality to hold:
\begin{equation}
\label{eqn:half-of-binom}
 |\cS| > \frac{1}{2} {n \choose k/2}
 \ .
\end{equation}
To demonstrate this, we start from inequality~\eqref{S} and proceed through a sequence of inequalities. 
In the process, we use the following estimates on binomial coefficients, for positive integers $x$:
\[
\left(\frac{n}{x}\right)^x \le \binom{n }{x} < \left(\frac{ne}{x}\right)^x
\ ,
\]
along with the assumed bound on $t$ in the form $bt\le \frac{k}{2}\lg \frac{n}{k}-\frac{3k-2}{2}$.
The algebraic manipulations are as follows:
\begin{eqnarray*}
  |\cS|  &\geq&  \binom{n}{k}/2^{bt} \\
         &\geq&  2^{k\lg (n/k) - bt}  \mbox{  }\\
         &\geq&  2^{k\lg (n/k) - (k/2)\lg (n/k) + (3k-2)/2} \;\;\mbox{\small after substituting the bound on $bt$}\\
         &=&     2^{(k/2)\lg (n/k) + 3k/2-1} \\
                  &=&     2^{(k/2)\lg (8n/k) -1} \\
         &>&  2^{(k/2)\lg (2ne/k)-1}  \\
         &=&     \frac{1}{2}  \left(\frac{2ne}{k}\right)^{k/2}  \\
         &>&  \frac{1}{2}   \binom{n }{ k/2}  \ .
\end{eqnarray*}
We have thus justified~\eqref{eqn:half-of-binom}.
This in turn implies that  there exist  two sets in $\cS$ whose intersection has exactly  $\frac{k}{2}$ elements.
This completes the proof of existence of two sets $A$ and $B$ in $\cS$ that satisfy part~(a). 
\end{proof}

Now we proceed to prove the lower bound, which is formulated as follows.


\begin{theorem}
\label{thm:lower-bound}

Any deterministic oblivious algorithm that wakes up a network of $n$ nodes with $b$ channels, when at most $k$ nodes are activated spontaneously, for $2< k\le n$, requires more than  
\[
\frac{k}{4b}\lg \frac{n}{k} - \frac{3k-2}{2b}
\]
 time steps.
\end{theorem}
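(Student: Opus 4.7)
The plan is to derive the theorem from Lemma~\ref{l:sets} by exhibiting a single adversarial activation pattern on which the algorithm provably fails. Suppose, for contradiction, that $\cA = \{Q_1, \ldots, Q_t\}$ wakes up every activation pattern of at most $k$ stations within $t \leq \frac{k}{4b}\lg(n/k) - (k+1)/b$ rounds. Because Lemma~\ref{l:sets} requires $k/2$ to be a prime power, I would first use Bertrand's postulate to choose a prime $q$ with $k/4 < q < k/2$ and set $k' = 2q$, so that $k/2 < k' < k$ and
\[
\frac{k'}{2b}\lg \frac{n}{k'} - \frac{k'+1}{b}
\;\geq\;
\frac{k}{4b}\lg \frac{n}{k} - \frac{k+1}{b}
\;\geq\; t .
\]
Applying Lemma~\ref{l:sets} with $k'$ in place of $k$ then produces sets $A, B \in \cF_{k'}^n$ with $|A\cap B| = k'/2$ such that $|A\cap Q_{i,\beta}| \equiv |B\cap Q_{i,\beta}| \pmod 2$ holds for every round $1\le i\le t$ and every channel $1\le \beta\le b$.

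Next I would build the bad activation pattern from $A$ and $B$. Let $C = A\cap B$, $A' = A\setminus C$, $B' = B\setminus C$, and $S = A'\cup B'$. Since $A'$ and $B'$ are disjoint and each has $k'/2$ elements, $S$ contains exactly $k' \le k$ stations, so the adversary may activate all of $S$ simultaneously at global round~$1$. In this execution, the transmitters on channel $\beta$ in round $i$ form exactly the set $S\cap Q_{i,\beta}$, so waking up within the first $t$ rounds would require $|S\cap Q_{i,\beta}|=1$ for some $i\le t$ and some $\beta$.

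The heart of the argument is to rule out such a round--channel pair using parity. From the disjoint decompositions $A = C\sqcup A'$ and $B = C\sqcup B'$, the term $|C\cap Q_{i,\beta}|$ cancels from the parity congruence supplied by Lemma~\ref{l:sets}, leaving $|A'\cap Q_{i,\beta}|\equiv |B'\cap Q_{i,\beta}|\pmod 2$. Disjointness of $A'$ and $B'$ then gives
\[
|S\cap Q_{i,\beta}| \;=\; |A'\cap Q_{i,\beta}| + |B'\cap Q_{i,\beta}| ,
\]
a sum of two integers of equal parity, hence always even and in particular never equal to~$1$. Thus no wake-up occurs in rounds $1,\dots,t$ on the activation pattern~$S$, contradicting the assumed correctness of~$\cA$.

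The main obstacle is mechanical rather than conceptual: one has to handle the prime-power hypothesis of Lemma~\ref{l:sets}, which is exactly where the factor of two between $\frac{k}{2b}$ and $\frac{k}{4b}$ is absorbed, and to verify that the single-round activation of $S$ with $|S|=k'\le k$ is admissible in the model. The parity-to-disjoint-union observation in the last display is the one-line insight that converts the combinatorial Lemma~\ref{l:sets} into the stated lower bound on wake-up time.
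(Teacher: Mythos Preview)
Your proof is correct and follows essentially the same route as the paper's: reduce to Lemma~\ref{l:sets} via a $k'\in(k/2,k]$ with $k'/2$ a prime power, then use the parity cancellation on $A' = A\setminus B$, $B' = B\setminus A$ to show $|S\cap Q_{i,\beta}|$ is always even for $S=A'\cup B'$. The one cosmetic difference is that the paper takes $k'=2^i$ for the largest $i$ with $2^i\le k$ (so $k'/2$ is a power of~$2$), which is simpler than your appeal to Bertrand's postulate and sidesteps edge cases for small~$k$.
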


\begin{proof}
Let $i$ be the largest integer such that $2< 2^i \leq k$. 
Assume that $k' = 2^i$ stations, out of at most $k$ available stations, are activated simultaneously at time step zero.
Let $\cA =\{Q_1,Q_2,\ldots,Q_t\}$ be an algorithm such that the following inequality holds:
\begin{equation}
\label{eqn:lower-bound}
t \leq \frac{k'}{2b}\lg \frac{n}{k'} - \frac{3k'-2}{2b}
\ .
\end{equation}
Lemma~\ref{l:sets} is applicable, because $k'/2$ is power of $2$ and so a prime power.
Let $A$ and $B$ be two subsets of $\cF_{k'}^n$, with the properties as stated in Lemma~\ref{l:sets}.
Let us set $A'=A\setminus B$ and $B'=B \setminus A$. 
Observe that if $A$ and~$B$ have properties  (a) and~(b) of  Lemma~\ref{l:sets} then the following holds for $A'$ and $B'$:
\begin{enumerate}
\item[\rm (a*)] $|A'| = |B'| = \frac{k'}{2}$,

\item[\rm (b*)] $A'\cap B'=\emptyset$, 

\item[\rm (c*)] $[|A'\cap Q_{i,\beta}|  \mbox{ is odd}]  = [|B'\cap Q_{i,\beta}|  \mbox{ is odd}]$, 
for every $1\le \beta\le b$ and $1\le i \le t$.
\end{enumerate}
We set $X = A' \cup B'$ to obtain that (a*) and (b*) imply $|X| = k'$. 
Moreover, from (c*) it follows that 
either $X\cap Q_{i,\beta} = \emptyset$
or $|X\cap Q_{i,\beta}| \ge 2$, for all $1\le \beta\le b$ and $1\le i \le t$.
Consider an execution in which the stations in $X$ are simultaneously activated as the only stations activated spontaneously.
Then, during the first $t$ time steps after activations, no station in $X$ is heard on any channel.

We conclude that if an algorithm $\cA$ always wakes up the network within $t$ steps then the 
inequality~\eqref{eqn:lower-bound} cannot hold.
Consequently, since $k/2<k'\le k$, the lower bound follows.
\end{proof}

\section{A Randomized Algorithm}

\label{sec:randomized-algorithm}

A pseudocode of a randomized algorithm, called \textsc{Channel-Screening}, is in Figure~\ref{alg:channel-screening}.
All random bits generated during an execution are independent from each other.
The pseudocode refers to~$k$, which means it is known.
At the same time, $n$ needs not to be known, because only active stations participate in the execution, so their number is always bounded above by~$k$.


\begin{figure}[t]
\rule{\textwidth}{0.75pt}

\F 
\textbf{Algorithm} \textsc{Channel-Screening}

\rule{\textwidth}{0.75pt}
\begin{center}
\begin{minipage}{\pagewidth}
\begin{description}
\item[\texttt{repeat}]  \texttt{concurrently} for each channel $\beta$ 
\begin{description}
\item[\rm transmit] a message on channel $\beta$ with probability  $k^{-\beta/b}$
\end{description}
\item[\tt until] a message is heard on some channel
\end{description}
\end{minipage}
\FFF

\rule{\textwidth}{0.75pt}

\parbox{\captionwidth}{\caption{\label{alg:channel-screening}
A randomized algorithm. 
The same pseudocode is used by any station  that gets activated spontaneously. 
}}
\end{center}
\end{figure}

\begin{lemma} 
\label{lem:class} 

Let $t$ be a time step and let $1\leq \beta\leq b$ be such that the following inequalities hold:
\[
k^{(\beta-1)/b} \leq |W(t)| \leq k^{\beta/b}
\ .
\] 
When algorithm \textsc{Channel-Screening} is executed then the probability of hearing a message at time step~$t$ on channel $\beta$ is at least $\frac{1}{2ek^{1/b}}$.
\end{lemma}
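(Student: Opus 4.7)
The plan is to compute the probability of a successful transmission on channel $\beta$ at round $t$ exactly, and then exploit the two bounds on $|W(t)|$ stated in the lemma to control the resulting expression.

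First I would set $m=|W(t)|$ and $p=k^{-\beta/b}$, which is the transmission probability that \textsc{Channel-Screening} prescribes on channel $\beta$. Since each of the $m$ active stations independently tosses its own coin on channel $\beta$, the event ``a message is heard on channel $\beta$ in round $t$'' is exactly the event that precisely one of these $m$ trials succeeds. Its probability is
\[
\binom{m}{1}p(1-p)^{m-1}=mp(1-p)^{m-1}.
\]
So the task reduces to lower-bounding this Bernoulli ``single success'' mass under the two bounds on~$m$.

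The next step uses the two hypotheses. The lower bound $m\ge k^{(\beta-1)/b}$ immediately gives
\[
mp\ge k^{(\beta-1)/b}\cdot k^{-\beta/b}=k^{-1/b},
\]
which is exactly the factor $k^{-1/b}$ that appears in the conclusion. The upper bound $m\le k^{\beta/b}$ gives $mp\le 1$, equivalently $p\le 1/m$; this places us in the regime where Poisson-type approximations are valid and where the standard single-success mass is bounded below by a universal constant.

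The main (and only nontrivial) step is therefore to show that $(1-p)^{m-1}$ is bounded below by an absolute constant. Since $(1-p)^{m-1}$ is decreasing in $p$ and $p\le 1/m$, it suffices to show that $(1-1/m)^{m-1}\ge 1/e$; this is a classical elementary estimate (the sequence $(1-1/m)^{m-1}$ is monotonically decreasing in $m$ to the limit $1/e$, which one can verify by taking logs and using $\ln(1-1/m)\ge -1/(m-1)$). Combining gives
\[
mp(1-p)^{m-1}\ge k^{-1/b}\cdot\frac{1}{e}\ge\frac{1}{2ek^{1/b}},
\]
which is the claimed lower bound (with a factor-of-two slack). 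The only mild obstacle is the constant in the third step; the generous factor $1/2$ in the statement means any reasonable form of the inequality $(1-p)^{m-1}=\Omega(1)$ under $mp\le 1$ will suffice, so even a cruder substitute such as $(1-p)^{m-1}\ge e^{-mp/(1-p)}\ge e^{-2}$ (valid whenever $p\le 1/2$ and $mp\le 1$) would close the argument.
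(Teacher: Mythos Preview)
Your proposal is correct and follows essentially the same approach as the paper: write the single-success probability as $mp(1-p)^{m-1}$, use the lower bound on $m$ to control the factor $mp\ge k^{-1/b}$, and use the upper bound on $m$ to bound $(1-p)^{m-1}$ by an absolute constant. The only cosmetic difference is that the paper replaces $m-1$ by $k^{\beta/b}$ in the exponent and invokes $(1-1/x)^x\ge 1/(2e)$, whereas you replace $p$ by $1/m$ and invoke $(1-1/m)^{m-1}\ge 1/e$; your version is in fact slightly sharper and makes the factor $1/2$ genuinely slack.
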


\begin{proof}
Let $S(\beta, t)$ be the event of a successful transmission on channel $\beta$ at time $t$.
The probability that a station $w\in W(t)$ transmits at time $t$ on channel $\beta$ while all the
others remain silent is
\begin{eqnarray*}
       \Pr (S(\beta, t) ) &\geq& \frac{|W(t)|}{k^{\beta/b}} \left( 1- \frac{1}{k^{\beta/b}} \right)^{|W(t)|-1}\\
                         &\geq& \frac{k^{(\beta-1)/b}}{k^{\beta/b}} \left( 1- \frac{1}{k^{\beta/b}} \right)^{ k^{\beta/b}},
\end{eqnarray*}
where the last inequality follows from the hypothesis that $k^{(\beta-1)/b} \leq |W(t)| \leq k^{\beta/b}$.
Hence
\[
     \Pr (S(\beta, t) ) \geq \frac{1}{2ek^{1/b}}\ ,
\]
which completes the proof.
\end{proof}

An  estimate the number of rounds needed  to make the probability of failure smaller than a threshold $\epsilon$ is given in the following theorem:


\begin{theorem}
\label{thm:randomized-algorithm}

Algorithm \textsc{Channel-Screening}  executed by $n$ nodes on a network with $b$ channels and when at most $k$  stations get activated succeeds in waking up the network  in $\cO(k^{1/b}\ln \frac{1}{\epsilon})$ time with a probability that is at least $1-\epsilon$.
\end{theorem}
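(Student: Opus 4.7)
The plan is to combine Lemma~\ref{lem:class} with the independence of the random choices across rounds and a standard exponential-tail estimate. The central observation is that at every time step $t$ counted from the first spontaneous activation we have $1 \le |W(t)| \le k$, so there exists an index $\beta \in \{1,\dots,b\}$ with $k^{(\beta-1)/b} \le |W(t)| \le k^{\beta/b}$; that is, the $b$ intervals of the form $[k^{(\beta-1)/b}, k^{\beta/b}]$ cover the whole range $[1,k]$. Consequently Lemma~\ref{lem:class} is applicable at every round, and it guarantees that at each such round there is some channel on which a message is successfully heard with probability at least
\[
p \;:=\; \frac{1}{2e\,k^{1/b}}.
\]

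Next I would exploit that the transmissions in distinct rounds use fresh independent random bits, as stated in the algorithm description. Hence the events ``no station is heard in round $t$'' across different rounds $t$ are independent (conditioned on the activation pattern, which is arbitrary but fixed). Therefore the probability that the network is not woken up during any given $T$ consecutive rounds following the first activation is at most
\[
(1-p)^{T} \;\le\; e^{-pT}.
\]

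Setting $e^{-pT} \le \epsilon$ and solving for $T$ gives $T \ge \frac{1}{p}\ln\frac{1}{\epsilon} = 2e\,k^{1/b}\ln\frac{1}{\epsilon}$, which yields the desired $\cO(k^{1/b}\ln\frac{1}{\epsilon})$ bound. The main (small) subtlety to handle carefully is justifying that Lemma~\ref{lem:class} genuinely applies in every round despite the activation pattern being adversarial and $|W(t)|$ potentially growing over time: the lemma gives a per-round bound that depends only on $|W(t)|$, and the existence of a suitable $\beta$ is guaranteed simply by $1 \le |W(t)| \le k$. Once this is spelled out, the independence-of-rounds argument together with the inequality $1-x \le e^{-x}$ finishes the proof.
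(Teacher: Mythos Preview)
Your proposal is correct and follows essentially the same approach as the paper's proof. The paper organizes the argument by partitioning the set $T$ of time steps into subsets $T_\beta = \{t \in T : k^{(\beta-1)/b} \le |W(t)| \le k^{\beta/b}\}$ and then applies Lemma~\ref{lem:class} on each $T_\beta$ separately before multiplying the failure probabilities, whereas you phrase it as ``at every round there is some good $\beta$''; these are the same idea, and both conclude with the bound $(1-\tfrac{1}{2ek^{1/b}})^{\lambda}\le\epsilon$ for $\lambda \ge 2e\,k^{1/b}\ln\frac{1}{\epsilon}$.
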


\begin{proof}
Let us consider a set of contiguous time steps $T$.
For $1\leq \beta \leq b$, let us use the following notation:
\[
T_{\beta} = \{ t \in T |\ k^{(\beta-1)/b} \leq |W(t)| \leq k^{\beta/b}\}\ .
\]
Let $\bar{E}(t)$ be the event of an unsuccessful time step $t$, in which no station transmits as the only transmitter on any channel, and let $\bar{E}(\beta, t)$ be the event of an unsuccessful time step $t$ on channel~$\beta$, with $1\leq \beta \leq b$.
By Lemma~\ref{lem:class}, the probability of having a sequence of $\lambda = |T|$ unsuccessful 
time steps can be estimated as follows:
\begin{eqnarray*}
       \Pr\Bigl ( \bigcap_{t \in T} \bar{E}(t) \Bigr )
             & \leq &
                         \Pr \Bigl ( \bigcap_{t \in T_1} \bar{E}(1, t) \Bigr ) \cdot
                         \Pr \Bigl ( \bigcap_{t \in T_2} \bar{E}(2, t) \Bigr ) \cdots
                         \Pr \Bigl ( \bigcap_{t \in T_b} \bar{E}(b, t) \Bigr ) \\
             & \leq &
                         \Bigl( 1 - \frac{1}{2ek^{1/b}} \Bigr)^{|T_1|} \cdot
                         \Bigl( 1 - \frac{1}{2ek^{1/b}} \Bigr)^{|T_2|} \cdots
                         \Bigl( 1 - \frac{1}{2ek^{1/b}} \Bigr)^{|T_b|} \\ 
             & \leq &   
                         \Bigl( 1 - \frac{1}{2ek^{1/b}} \Bigr)^{\lambda} \\
             & \leq & 
                         \epsilon,                                 
\end{eqnarray*}
for $\lambda \geq 2ek^{1/b}\ln \frac{1}{\epsilon}$.
\end{proof}

Algorithm \textsc{Channel-Screening} demonstrates that the lower bound of Theorem~\ref{thm:lower-bound} can be beaten by a randomized algorithm that can use the magnitude of the parameter~$k$ explicitly.
Actually, the bound of Theorem~\ref{thm:randomized-algorithm} is such that just for $b=2$ channels the network is woken up with a positive probability in time that is $\cO(\sqrt{k})$ while the lower bound of Theorem~\ref{thm:lower-bound} implies  that time~$\Omega(k)$ is required.

\section{A Generic Deterministic Oblivious Algorithm}

\label{sec:deterministic-oblivious-algorithms}

Deterministic oblivious algorithms are represented as schedules of transmission precomputed for each station.
A schedule is simply a binary sequence.
Such schedules of transmission are organized as rows of a binary matrix, for the sake of visualization and discussion.

Let $\ell$ be a positive integer treated as a parameter.
An array~$\cT$ of entries of the form $T(u,\beta,j)$, where $u\in V$ is a station, $\beta$ such that $1\leq \beta \leq b$ is a channel,  and integer $j$ is such that $0\leq j\leq \ell$, is a \emph{transmission array} when each entry is  either a~$0$ or a~$1$.
The parameter $\ell=\ell(\cT)$ is called the \emph{length} of array~$\cT$. 
Entries of a transmission array~$\cT$ are called \emph{transmission bits} of~$\cT$.
The number~$j$ is the \emph{position} of a transmission bit~$T(u,\beta,j)$.

For a transmission array $\cT$, a station~$u$ and channel~$\beta$, the sequence of entries $T(u,\beta,j)$,  for $j=1,\ldots,\ell$, is called a \emph{$(u,\beta)$-schedule} and is denoted $\cT(u,\beta,\ast)$.
A $(u,\beta)$-schedule $\cT(u,\beta,\ast)$ defines the following \emph{schedule of transmissions} for station~$u$: transmit on  channel~$\beta$ in the $j$th round if and only if $T(u,\beta,j) = 1$.
Every station $u\in V$ is provided with a copy of all entries  $T(u,\ast,\ast)$ of some transmission array~$\cT$ as a way to instantiate the code of a wake-up algorithm.

A pseudocode representation of such a generic deterministic oblivious algorithm  is given in Figure~\ref{alg:wake-up}.
In analysis of performance, based on properties of~$\cT$, it is assumed that the number~$n$ is known and the parameter~$k$ is unknown.


\begin{figure}[t]
\rule{\textwidth}{0.75pt}

\F 
\textbf{Algorithm} \textsc{Wake-Up}\,$(\cT)$

\rule{\textwidth}{0.75pt}
\begin{center}
\begin{minipage}{\pagewidth}
\begin{description}
\item[\texttt{concurrently for}] each channel $\beta$ 
\begin{description}
\item[\texttt{for}]  $j\gets 1$ \texttt{to} $\ell(\cT)$ \texttt{do}
\begin{description}
\item[\texttt{if}] $T(u,\beta,j) = 1$ \texttt{then} transmit on  channel~$\beta$ in round $j$ 
\item[\tt if] a transmission was just heard on some channel \texttt{then exit}
\end{description}
\end{description}
\end{description}
\end{minipage}
\FFF

\rule{\textwidth}{0.75pt}

\parbox{\captionwidth}{\caption{\label{alg:wake-up}
A pseudocode for a station $u$ that gets activated spontaneously of a generic algorithm. 
The algorithm is instantiated by  a transmission array~$\cT$, where $\ell(\cT)$ is its length.
Station~$u$ knows only the part $\cT(u,\ast,\ast)$ of~$\cT$.
}}
\end{center}
\end{figure}

Time is measured by each station's private clock, with rounds counted from the activation.
Let us recall that if a station $u$ is active in a time step~$t$ then $u$ perceives this time step~$t$ as round~$t-\sigma_u$.

A station $v$ is $\beta$-{\em isolated} at time step~$t$ when $v \in W(t)$ and when both $T(v,\beta, t-\sigma_v) = 1$ and $T(u,\beta, t-\sigma_u) = 0$, for every $u\in W(t) \setminus \{v\}$.
A station $v$ is \emph{isolated at time step $t$} when $v$ is $\beta$-isolated at time step~$t$ for some channel $1\leq \beta \leq b$.

For a given transmission array,  by an \emph{isolated position} we understand a pair $(t,\beta)$ of time step~$t$ and channel~$\beta$ such that there is a $\beta$-isolated station at time step~$t$.
When $(t,\beta)$ is an isolated position of a transmission array~$\cT$  then a successful wake-up occurs by time~$t$ when algorithm \textsc{Wake-Up}\,$(\cT)$ is executed.

We organize a transmission array by partitioning it into sections of increasing length.
We will use the property of the mapping $i\mapsto 2^i\cdot i^{1/b}$ to be strictly increasing, which can be verified directly.

Let $c$ be a positive integer and let us define the function $\varphi$ as follows:
\begin{enumerate}
\item
$\varphi (0) = 0$, and 
\item
$\varphi(i) = c2^i\cdot i^{1/b} \lg n$, for positive integers~$i$.
\end{enumerate}
The $i$th \emph{section} of a $(u,\beta)$-schedule $T(u,\beta,*)$, for $1\leq i \leq \lg n$,  consists of a concatenation of all the segments 
\[
T(u,\beta, \varphi(i)), T(u,\beta, \varphi(i)+1), \ldots, T(u,\beta, \varphi(i+1)-1)
\] 
of transmission bits. 
A station executing the $i$th section of its schedules is said to be  \emph{in stage~$i$}.
The stations that are in a stage $i$ at a time step~$j$ are denoted by~$W_i(j)$.
The constant $c$ will be determined later as needed.

The  identity 
\[
\bigcup_{i=1}^{\lg n} W_i(j) = W(j)
\] 
holds for every time step $j$, because an active station is in some stage.
Observe that the length of the $i$th section for any $(u,\beta)$-schedule is $\varphi(i+1) - \varphi(i)$, which can be verified to be at least as large as~$\varphi(i)$.

These time steps at which sufficiently many stations are in a stage, say, $\omega$, and no station is involved in a stage with index larger than $\omega$, play a special role in the analysis.
The relevant notions are that of a balanced time step, given in  Definition~\ref{bal}, and also of a balanced time interval, given in Definition~\ref{baltime}.
Intuitively, a balanced time step is a round at which there are  some $\Theta(2^{\omega})$ awaken stations involved in section $\omega$ and no station involved  in the subsequent sections, for some $1\leq \omega \leq \log n$. 
Similarly, a balanced time interval is a time interval that includes sufficiently many balanced time slots.
These notions are precisely defined as follows.

\begin{definition}[Balanced time steps.]
\label{bal}
For a stage $\omega$, where $1\leq \omega \leq \log k$, a time step~$j$ is \emph{$\omega$-balanced} when the following properties hold: 
\begin{enumerate}
\item[\rm (a)] ${2^{\omega}}\leq |W_{\omega}(j)| \leq {2^{\omega+2}}$, and
 \item[\rm (b)]  $|W_{i}(j)| = 0$, for all stages $i$ such that $i > \omega$.
 \end{enumerate}
\end{definition}

When we refer to time intervals then this means intervals of time steps of the global time.
We identify  time intervals with sufficiently large sequences of consecutive time steps that 
contain only balanced time steps as follows:

\begin{definition}[Balanced time intervals.]
\label{baltime}

Let $\omega$ be a stage,  where $1\leq \omega \leq \log k$.
A time interval $[t_1, t_2]$ of size $\varphi(\omega-1)$, 
is said to be $\omega$-\emph{balanced}, if every time step $j \in [t_1, t_2]$ is $\omega$-balanced.
An interval is called  \emph{balanced} when there exists a stage $\omega$, for $1\leq \omega\leq \log k$,  such that it is $\omega$-balanced.
\end{definition}

For a time step $j$, we define $\Psi (j)$ as follows:
\[
\Psi (j) = \sum_{\omega=1}^{\log k} \frac{|W_\omega(j)|}{2^i}.
\] 
We categorize balanced time intervals further by considering their useful properties:

\begin{definition}[Light time intervals.]
\label{d:light}
Let $\omega$ be a stage,  where $1\leq \omega \leq \log k$.
An $\omega$-balanced time interval $[t_1, t_2]$ is called \emph{$\omega$-light} when  
\begin{enumerate}
\item[\rm (1)] the inequality $\Bigl| \bigcup_{i=1}^{\omega} W_{i}(j) \Bigr| \leq 2^{\omega+4}$ 
holds for every time step $j\in [t_1,t_2]$, and 
\item[\rm (2)] interval $[t_1, t_2]$ contains at least $\varphi(\omega-2)$ time steps $j$ such that  
\begin{equation}
\label{condition}
   1 \leq \Psi(j) \leq 128 \cdot \omega
 \ .
\end{equation} 
\end{enumerate}
An interval is called  \emph{light} when there exists a stage $\omega$, for $1\leq \omega\leq \log n$,  such that it is $\omega$-light.
\end{definition}

We show existence of a suitable array of waking schedules by the probabilistic method.
In the argument, we refer to randomized transmission arrays, as defined next.
These are arrays with entries being independent random variables.

\begin{definition}[Regular randomized transmission arrays.]
\label{def:array}

A \emph{randomized transmission array~$\cT$} has the structure of a transmission array. Transmission bits $T(u,\beta,j)$ are not fixed but instead are independent Bernoulli random variables.
Let  $u$ be a station and  $\beta$ denote a channel.
For $1\leq i\leq \lg n$, the entries of the $i$th section of the  $(u,\beta)$-schedule are stipulated to have the following probability distribution
\[
\Pr (T(u,\beta,j) = 1 ) = 2^{-i}\cdot i^{-\beta/b}
\ ,
\] 
for $j= \varphi(i), \ldots, \varphi(i+1)-1$.
\end{definition}

We say that the number of channels $b$ is \emph{$n$-large}, or simply \emph{large}, or that there are \emph{$n$-many channels}, when the inequality $b>\lbl$ holds.
We set 
\[
\varphi(i)=c\cdot (2^i/b) \lg n\lbl
\] 
for such $b$, where $c$ is a sufficiently large constant to be specified later.
Recall the notation 
\[
\Psi (j) = \sum_{i=1}^{\log k} \frac{|W_i(j)|}{2^i}
\ ,
\]
for a time step~$j$. 
For $n$-many channels, we use a modified version of a light time interval (see Definition~\ref{d:light}), where  condition~(2) is replaced by the following one:
\begin{equation}
\label{newcondition}
   1 \leq \Psi(j) \leq 128 \cdot \log n.
\end{equation} 
For a channel~$\beta$, we use the following notation:
\begin{equation}
\label{eqn:beta-star}
\beta^*=\beta \bmod \lbl
\ .
\end{equation}

A \emph{modified randomized transmission array~$\cT$} has the structure of a transmission array. 
Transmission bits $T(u,\beta,j)$ are not fixed but instead are independent Bernoulli random variables.
Let  $u$ be a station and  $\beta$ denote a channel.
For $1\leq i\leq \lg n$, the entries of the $i$th section of the  $(u,\beta)$-schedule are stipulated to have the following probability distribution
\[
\Pr (T(u,\beta,j) = 1 ) = b \cdot 2^{-i -\beta^*}
\ ,
\] 
for $j= \varphi(i), \ldots, \varphi(i+1)-1$, where we use the notation stipulated in~\eqref{eqn:beta-star}.

A randomized transmission array, whether regular or modified, is used to represent a randomized wake-up algorithm.
To decide if a station~$u$ transmits on a channel~$\beta$ in a round~$j$, this station first carries out a Bernoulli trial with the probability of success as stipulated in the definition of the respective randomized array, and transmits when the experiment results in a success.
Regular arrays are used in the general case and modified arrays when there are $n$-many channels.

\begin{definition}[Waking arrays.]
\label{schedule}

A transmission array $\cT$ is said to be \emph{waking} when for every $k$ such that $1\leq k\leq n$ and a light interval $[t_1, t_2]$ such that $|W(t)| \leq k$, whenever $t_1 \leq t \leq t_2$,
there exist both a time step~$j\in [t_1, t_2]$ and a station $w\in W(j)$ such that $w$ is isolated 
at time step~$j$. 
\end{definition}

The length of a waking array  is the worst-case time bound on performance of the wake-up algorithm determined by this transmission array.

\section{A General Deterministic Algorithm}

\label{sec:general-deterministic-algorithm}

We consider waking arrays $\cT$ to be used as instantiations of algorithm \textsc{Wake-Up} given in Section~\ref{sec:deterministic-oblivious-algorithms}.
The goal is to minimize their length in terms of $n$ while their effectiveness is to be expressed in terms of both $n$ and $k\le n$.
The existence of waking arrays of small length is shown by the probabilistic method.
The main fact proved in this Section is as follows:


\begin{theorem}
\label{thm:general-deterministic}

There exists a deterministic waking array $\cT$ of length $\cO(n\log n \log^{1/b} n)$ such that, when used to instantiate the generic algorithm \textsc{Wake-Up}, produces an algorithm \textsc{Wake-Up}\,$(\cT)$ that wakes up a network in  time $\cO(k\log n\log^{1/b} k)$, for up to $k\le n$ stations  activated spontaneously.
\end{theorem}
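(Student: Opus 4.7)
The plan is to establish existence of the waking array via the probabilistic method applied to the regular randomized transmission array of Definition~\ref{def:array}, taking its length to be $L = \varphi(\lg n + 1) = \cO(n \log n \log^{1/b} n)$, which already matches the claimed array size. It remains to show that a random instance is waking (in the sense of Definition~\ref{schedule}) with positive probability, and that the wake-up time it induces on activation patterns of size at most $k$ is $\cO(\varphi(\lg k)) = \cO(k\log n \log^{1/b}k)$.

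First I would prove a purely combinatorial claim: for every activation pattern with at most $k$ stations, within $\cO(\varphi(\lg k))$ rounds after the last activation, some $\omega$-light time interval with $\omega \leq \lg k$ must occur. The intuition is that each active station progresses through stages and stage~$i$ lasts $\varphi(i+1)-\varphi(i)$ rounds, a length that grows geometrically in~$i$. Tracking the multisets $W_i(j)$ as stations get activated and advance, one locates a stage $\omega \leq \lg k$ and a round $j_0$ where $|W_\omega(j_0)|\in [2^\omega, 2^{\omega+2}]$ and $|W_i(j_0)|=0$ for $i>\omega$; since nobody in stage~$\omega$ leaves that stage for at least $\varphi(\omega+1)-\varphi(\omega) \geq \varphi(\omega-1)$ rounds, an $\omega$-balanced interval of length $\varphi(\omega-1)$ emerges. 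A counting argument over this interval then produces a sub-interval of length $\varphi(\omega-2)$ on which $1 \leq \Psi(j) \leq 128\omega$, since a round with $\Psi(j) > 128\omega$ would force an anomalously large contribution to $\sum_j \Psi(j)$ that the balanced population cannot sustain, while $\Psi(j) \geq |W_\omega(j)|/2^\omega \geq 1$ is automatic.

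Next I would bound the probability that the random array admits no isolated position in a fixed $\omega$-light interval $[t_1,t_2]$. For a round~$j$ satisfying condition~\eqref{condition}, I select a channel $\beta^*$ such that $\omega^{-\beta^*/b}\Psi(j) \in [1/e, 1]$; this is possible because $\beta \mapsto \omega^{-\beta/b}$ decreases geometrically from $1$ to $1/\omega$ while $1 \leq \Psi(j) \leq 128\omega$. For any fixed $v \in W_\omega(j)$, the probability that $v$ is $\beta^*$-isolated at round~$j$ is
\[
2^{-\omega}\omega^{-\beta^*/b}\prod_{u\neq v}\bigl(1-2^{-i_u}\, i_u^{-\beta^*/b}\bigr) \;\geq\; 2^{-\omega}\omega^{-\beta^*/b}\exp\bigl(-2\,\omega^{-\beta^*/b}\,\Psi(j)\bigr) \;=\; \Omega\bigl(2^{-\omega}\omega^{-1/b}\bigr),
\]
where $i_u$ denotes the stage of $u$ at round $j$ and I used $i_u^{-\beta^*/b}\leq \omega^{-\beta^*/b}$ for $i_u\leq \omega$. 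Summing disjoint isolation events over the $\Theta(2^\omega)$ stations in $W_\omega(j)$ yields per-round isolation probability $\Omega(\omega^{-1/b})$ on channel $\beta^*$. Across the $\varphi(\omega-2) = \Omega(2^\omega\omega^{1/b}\log n)$ qualifying rounds in $[t_1,t_2]$, the probability of no isolation anywhere is at most $\exp(-\Omega(2^\omega \log n))$.

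Finally, I would union-bound over start positions of light intervals (at most $L = \poly(n)$ options), stages $\omega \leq \lg n$, and activation patterns. An activation pattern feeding an $\omega$-light interval involves at most $2^{\omega+4}$ stations activated within horizon~$L$, so their number is bounded by $\binom{n}{2^{\omega+4}}(L+1)^{2^{\omega+4}} = 2^{\cO(2^\omega \log n)}$. Choosing the constant $c$ in the definition of $\varphi$ sufficiently large, the per-interval failure bound $\exp(-\Omega(2^\omega \log n))$ dominates, and a random array is waking with positive probability; the existence of a deterministic waking array of length~$L$ follows. The combinatorial first step then gives the wake-up time $\cO(\varphi(\lg k)) = \cO(k\log n\log^{1/b}k)$. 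The main obstacle is the combinatorial claim: one must simultaneously enforce condition~(1) of Definition~\ref{d:light} (no overcrowding) and extract enough rounds satisfying condition~\eqref{condition} (balanced $\Psi(j)$), uniformly over activation patterns with arbitrary staggering, and do so within a horizon matched to~$\varphi(\lg k)$; the upper bound on~$\Psi$ is what controls destructive interference in the isolation probability, and the lower bound is what guarantees the dominant stage is populated.
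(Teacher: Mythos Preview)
Your overall strategy is the paper's: instantiate the regular randomized array of Definition~\ref{def:array}, argue combinatorially (as in Lemmas~\ref{zigzag}--\ref{C64} and~\ref{l:t2}) that every activation pattern with at most $k$ stations produces an $\omega$-light interval with $\omega\le\lg k$ ending by time $\cO(\varphi(\lg k))$, lower-bound the per-round isolation probability inside that interval (Lemmas~\ref{isolating}--\ref{probability}), and union-bound over activation patterns whose size is capped by the light-interval population bound (Lemma~\ref{l:fraction}). The counting in your final union bound matches the paper's.

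The isolation-probability step, however, contains a concrete error. You assert $i_u^{-\beta^*/b}\le\omega^{-\beta^*/b}$ for $i_u\le\omega$, but the inequality is reversed: a smaller positive base raised to a negative exponent is larger, so in fact $i_u^{-\beta^*/b}\ge\omega^{-\beta^*/b}$. Hence $\sum_u 2^{-i_u}i_u^{-\beta^*/b}\ge\omega^{-\beta^*/b}\Psi(j)$, not $\le$, and your lower bound on $\prod_{u\ne v}(1-p_u)$ does not follow. This is not cosmetic: a time step satisfying~\eqref{condition} may have, say, $2\omega$ stations sitting in stage~$1$ (contributing $\omega$ to $\Psi(j)$, and still leaving the total population well below $2^{\omega+4}$), and these stage-$1$ stations alone force $\sum_u p_u\ge\omega$ on \emph{every} channel, driving the non-collision factor down to $4^{-\omega}$. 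Your channel selection also breaks at the top of the range: since $\omega^{-\beta/b}$ for $\beta\in\{1,\dots,b\}$ only spans $[\omega^{-1},\omega^{-1/b}]$, when $\Psi(j)=128\omega$ the product $\omega^{-\beta/b}\Psi(j)$ is at least $128$ for every $\beta$, so no $\beta^*$ lands in $[1/e,1]$. The paper's Lemma~\ref{isolating} commits the same reversed inequality; a sound argument must upper-bound the weighted load $\sum_l|W_l(j)|\,2^{-l}l^{-\beta/b}$ directly rather than reduce it to a multiple of $\Psi(j)$.
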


We proceed with a sequence of preparatory Lemmas.
Let $X$ be the set of stations that are activated first.
Let $\sigma_X$ be the time step at which they become active. 
All stations are passive before the time step~$\sigma_X$.

Let $\g_0 = 0$ and for $i = 1,2,\ldots,\lg n$, define $\g_i$ as the sum of the lengths of the first $i$ sections. 
We have the following identities 
\[
\g_i = \sum_{h=1}^{i} (\varphi(h+1) - \varphi(h)) = \varphi(i+1)
\ .
\]

\begin{lemma}
\label{zigzag}

For $i = 1,2,\ldots,\lg n$, all stations in $X$ are in section $i$ of their transmission 
schedules between time $\sigma_X + \g_{i-1}$ and time $\sigma_X + \g_i-1$.
\end{lemma}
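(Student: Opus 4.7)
The lemma is a short bookkeeping consequence of the fact that all stations in $X$ share the same activation moment $\sigma_X$. My plan is to proceed in two small steps and then combine them.

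First, I would observe that, since every $u \in X$ has $\sigma_u = \sigma_X$ and the private clock of a station is reset to zero at the moment of activation, at any global time $t \ge \sigma_X$ each station in $X$ reads the current round on its private clock as the common value $t - \sigma_X$. In particular, the members of $X$ are perfectly synchronized: they execute identical positions of their transmission schedules in lockstep, so they are all in the same section simultaneously.

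Second, I would invoke the partition of a $(u,\beta)$-schedule into sections. By the very definition $\g_i = \sum_{h=1}^{i}(\varphi(h+1)-\varphi(h))$ as the cumulative length of the first $i$ sections, section $i$ occupies exactly the local rounds in the range $[\g_{i-1}, \g_i - 1]$. A station is in stage $i$ precisely when its private clock falls in this range. Combined with the first step, a station $u \in X$ is in section $i$ at global time $t$ if and only if $\g_{i-1} \le t - \sigma_X \le \g_i - 1$, which rewrites as $\sigma_X + \g_{i-1} \le t \le \sigma_X + \g_i - 1$, giving the claim.

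I do not anticipate any real obstacle; the lemma is a notational setup and its only mildly delicate point is to keep two indexings straight, namely the absolute position indices $\varphi(i), \varphi(i)+1,\ldots$ appearing in the definition of sections and the cumulative offsets $\g_0, \g_1,\ldots$ used in the statement, the two being reconciled by the telescoping definition of $\g_i$. The value of this lemma lies less in its difficulty than in the fact that it provides the synchronized baseline against which the later, genuinely harder analyses of staggered activation patterns — where private clocks of distinct stations are no longer aligned — will be compared.
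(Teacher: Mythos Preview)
Your proposal is correct and follows essentially the same approach as the paper's proof, which is a one-sentence observation that a station activated at $\sigma_X$ reaches section~$i$ at time $\sigma_X + \gamma_{i-1}$ and stays there until $\sigma_X + \gamma_i - 1$. Your version is simply a more explicit unpacking of the same bookkeeping, and your remark about reconciling the $\varphi$-indexing with the cumulative $\gamma$-offsets via telescoping is accurate.
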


\begin{proof}
Any station $x \in X$, woken up at time $\sigma_X$,  for $1 \leq i \leq \lg n$, reaches section $i$ at time $\sigma_X + \g_{i-1}$ and continues to transmit according to transmission bits in section $i$ until time $\sigma_X + \g_i-1$. 
\end{proof}

\begin{lemma}
\label{join}

Fix a time step $j'$ and an integer $\omega$, with $1 \leq \omega \leq \log n$. 
For any integer $h \geq 1$, there exists a time step $j''\geq j'$ such that the following holds for
$j = j'', \ldots, j'' + \varphi(\omega+h)$:
\[
     \bigcup_{i=1}^{\omega} W_i(j') \subseteq  W_{\omega+h+1}(j) 
     \ .
\] 
\end{lemma}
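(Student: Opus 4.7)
The plan is to choose $j''$ as the moment at which the \emph{latest-activated} station in $U = \bigcup_{i=1}^{\omega} W_i(j')$ enters its $(\omega+h+1)$st section, and then verify by direct calculation that every station of $U$ stays inside that section throughout the interval $[j'', j''+\varphi(\omega+h)]$. If $U$ is empty the claim is vacuous, so assume otherwise and let $\sigma_{\max}$ and $\sigma_{\min}$ be the latest and earliest activation times among stations of $U$.

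First I would read off, from the stage-membership characterization, the constraints on activation times. A station $u\in U$ is in some stage $i\le \omega$ at time $j'$, so its elapsed round count $j'-\sigma_u$ lies in $[\varphi(1),\varphi(\omega+1))$; equivalently, $\sigma_u \in (j'-\varphi(\omega+1),\, j'-\varphi(1)]$. In particular, the spread $\sigma_{\max}-\sigma_{\min}$ is strictly less than $\varphi(\omega+1)$. Now set $j'' = \sigma_{\max}+\varphi(\omega+h+1)$. The requirement $j''\ge j'$ is immediate: since $\sigma_{\max} > j'-\varphi(\omega+1)$ and $\varphi$ is increasing, $j'' > j' - \varphi(\omega+1) + \varphi(\omega+h+1) \ge j'$ whenever $h\ge 1$.

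Next I would check the two endpoints of the target time window. For any $u\in U$ and any $j\in[j'',j''+\varphi(\omega+h)]$, the elapsed round count $j-\sigma_u$ is bounded below by $j''-\sigma_{\max} = \varphi(\omega+h+1)$ and above by $(\sigma_{\max}-\sigma_{\min}) + \varphi(\omega+h+1) + \varphi(\omega+h) < \varphi(\omega+1) + \varphi(\omega+h+1) + \varphi(\omega+h)$. It therefore suffices to establish the inequality
\[
\varphi(\omega+1) + \varphi(\omega+h+1) + \varphi(\omega+h) \;\le\; \varphi(\omega+h+2),
\]
for then $j-\sigma_u\in[\varphi(\omega+h+1),\varphi(\omega+h+2))$, placing $u$ in stage $\omega+h+1$ as required.

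This last inequality is the only quantitative step, and I expect it to be the main (though routine) obstacle. It follows from the geometric growth of $\varphi(i)=c\cdot 2^{i}\cdot i^{1/b}\lg n$: the ratio $\varphi(i+1)/\varphi(i) = 2\cdot((i+1)/i)^{1/b}\ge 2$, so $\varphi(\omega+h+2)\ge 2\varphi(\omega+h+1)\ge \varphi(\omega+h+1)+2\varphi(\omega+h)$, and since $h\ge 1$ gives $\varphi(\omega+h)\ge \varphi(\omega+1)$, the bound follows. The essential content of the lemma is thus just that the window reserved for stage $\omega+h+1$ is long enough to absorb both the activation-time spread inside $U$ (bounded by $\varphi(\omega+1)$) and the target sub-interval of length $\varphi(\omega+h)$—which is exactly what the doubling of successive sections guarantees.
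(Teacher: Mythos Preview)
Your proof is correct and follows essentially the same approach as the paper's: both arguments identify a time window during which every station of $U$ sits in section $\omega+h+1$, and both reduce to the same quantitative fact about $\varphi$, namely $\varphi(\omega+h+2) \ge \varphi(\omega+h+1)+\varphi(\omega+h)+\varphi(\omega+1)$, derived from $\varphi(i+1)\ge 2\varphi(i)$ together with $h\ge 1$. The only cosmetic difference is the anchor point for $j''$: the paper takes $j''=j'+\varphi(\omega+h+1)$ (referencing the current time $j'$) while you take $j''=\sigma_{\max}+\varphi(\omega+h+1)$ (referencing the latest activation time), which gives a slightly earlier but equally valid window.
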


\begin{proof}
Let us fix $h\geq 1$. 
Recall that the sum of the lengths of the first $i$ sections is $\g_i = \varphi(i+1)$.
Any station $x\in W_1(j')$ is in section $\omega+h+1$ by time $j'+\varphi(\omega+h+1)$.
Analogously, a station $y\in W_{\omega}(j')$ cannot leave section $\omega+h+1$ 
before time step $j' + \varphi(\omega+h+2) - \varphi(\omega+1)$.
It follows that all stations in $W_i(j')$, for $1\leq i\leq \omega$, are in section $\omega+h+1$ 
between the time step 
\[
j'' = j'+\varphi(\omega+h+1)
\] 
and the time step
\[
\tau = j' + \varphi(\omega+h+2) - \varphi(\omega+1)
\ .
\]
It remains to count the number of time  steps between~$j''$ and~$\tau$.
We have  that
\begin{eqnarray*}
       \tau - j'' &=& \varphi(\omega+h+2) - \varphi(\omega+h+1) - \varphi(\omega+1) \\
                  &\geq& \varphi(\omega+h+1) - \varphi(\omega+1) \\
                  &\geq& \varphi(\omega+h),
\end{eqnarray*}
for every $h\geq 1$.
\end{proof}

\begin{lemma}
\label{fat}

Fix a time step $j'$ and an integer $\omega$, with $1 \leq \omega \leq \log n$.
Assume the following two inequalities:
\[
 \Bigl| \bigcup_{i=1}^{\omega-1} W_{i}(j') \Bigr| \geq 3\cdot |W_{\omega}(j')| \ \ \ \mbox{and}\ \ \ 
|W_{\omega}(j')| \geq 2^{\omega}
 \ .
\]
Then there exists an interval $[t_1, t_2]$ of size $\varphi(\omega+1)$ with $t_1 \geq j'$ such that
$|W_{\omega+2}(j)| \geq 2^{\omega+2}$. 
\end{lemma}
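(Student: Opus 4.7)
The statement is essentially a bookkeeping consequence of Lemma~\ref{join}. My plan is to apply that lemma with $h=1$, which yields a time step $j''\geq j'$ such that
\[
\bigcup_{i=1}^{\omega} W_{i}(j') \subseteq W_{\omega+2}(j)
\]
for every $j$ in an interval starting at $j''$ and of length $\varphi(\omega+1)$. Setting $t_1=j''$ and $t_2=j''+\varphi(\omega+1)$ then gives exactly the interval required in the conclusion, so everything will reduce to showing that the union on the left has at least $2^{\omega+2}$ stations.

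For that cardinality bound, I will use the simple observation that at any fixed time step every active station is in exactly one stage, so the sets $W_1(j'),\ldots,W_{\omega-1}(j')$ and $W_\omega(j')$ are pairwise disjoint. Hence
\[
\Bigl|\bigcup_{i=1}^{\omega} W_{i}(j')\Bigr|
=\Bigl|\bigcup_{i=1}^{\omega-1} W_{i}(j')\Bigr|+|W_{\omega}(j')|
\geq 3\cdot|W_{\omega}(j')|+|W_{\omega}(j')|
=4\cdot|W_{\omega}(j')|
\geq 2^{\omega+2},
\]
by combining the two hypotheses of the lemma. Since $\bigcup_{i=1}^{\omega} W_{i}(j')\subseteq W_{\omega+2}(j)$ throughout $[t_1,t_2]$ by Lemma~\ref{join}, the lower bound $|W_{\omega+2}(j)|\geq 2^{\omega+2}$ is inherited on the whole interval, which is exactly the desired conclusion.

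There is no real obstacle here: the proof is essentially ``combine the two hypotheses, apply Lemma~\ref{join}.'' The only point to check carefully is the choice $h=1$ in Lemma~\ref{join}, which is what couples the target stage ($\omega+2$, matching the conclusion) with the interval length ($\varphi(\omega+1)$, also matching the conclusion); any other value of $h$ would either not deliver an interval of length $\varphi(\omega+1)$ or would push the stations past stage $\omega+2$. I would state the proof in just a few lines, with the disjointness of stages as the only conceptual remark.
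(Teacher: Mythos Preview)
Your proposal is correct and follows essentially the same argument as the paper: both combine the two hypotheses via the disjointness of stages to obtain $\bigl|\bigcup_{i=1}^{\omega} W_i(j')\bigr|\ge 2^{\omega+2}$, then invoke Lemma~\ref{join} with $h=1$ and set $t_1=j''$, $t_2=j''+\varphi(\omega+1)$. The only cosmetic difference is the order of presentation (the paper derives the cardinality bound first and then applies Lemma~\ref{join}).
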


\begin{proof}
We have the following estimate:
\begin{eqnarray}
    \Bigl| \bigcup_{i=1}^{\omega} W_i(j') \Bigr| &=&  \Bigl| \bigcup_{i=1}^{\omega-1} W_i(j') \Bigr| + |W_\omega(j')|\nonumber \\
                                               &\geq& 3\cdot |W_\omega(j')| + |W_\omega(j')|\nonumber \\
                                               &=&   2^{\omega+2}. \label{und}
\end{eqnarray}
By Lemma ~\ref{join}, there exists a round $j''\geq j'$ such that for
$j = j'', \ldots, j'' + \varphi (\omega+1)$, 
\[
     \bigcup_{i=1}^{\omega} W_i(j') \subseteq  W_{\omega+2}(j).
\] 
Therefore, for $j = j'', \ldots, j'' + \varphi(\omega+1)$,  the following bounds hold:
\[
    \Bigl|W_{\omega+2}(j)\Bigr| \geq \Bigl| \bigcup_{i=1}^{\omega} W_i(j') \Bigr| \geq {2^{\omega+2}}
    \ ,
\]
where the last step follows from~\eqref{und}.
We conclude by setting $t_1 =  j''$ and $t_2 = j'' + \varphi(\omega+1)$.
\end{proof}

\begin{lemma}
\label{iter}

Suppose that $[t_1, t_2]$ is an interval of size $\varphi(\omega-1)$, for some $1 \leq \omega\leq \log k$, 
such that for every round $j \in [t_1, t_2]$, the following conditions hold:
\begin{itemize}
  \item [\rm (a')]  $|W_{\omega}(j)| \geq {2^{\omega}}$;
  \item [\rm (b')] for $i>\omega$, $|W_{i}(j)| = 0$.
\end{itemize}
Then there exists an $\omega'$-balanced interval for some $\omega \leq \omega'\leq \log k$.
\end{lemma}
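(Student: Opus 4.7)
I will proceed by downward induction on $\omega$, with a two-way case split at each stage: either the interval $[t_1,t_2]$ already witnesses $\omega$-balancedness, or an over-population of $W_\omega$ forces a promotion to a higher stage.

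\emph{Direct case.} Suppose $|W_\omega(j)|\le 2^{\omega+2}$ for every $j\in[t_1,t_2]$. Combined with hypotheses~(a') and~(b'), every such $j$ satisfies Definition~\ref{bal} at index~$\omega$, and the given length $\varphi(\omega-1)$ of $[t_1,t_2]$ is exactly what Definition~\ref{baltime} requires, so take $\omega'=\omega$. The base of the induction ($\omega\ge\log k-2$) falls in this case automatically, since then $|W_\omega(j)|\le k\le 2^{\omega+2}$ for free.

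\emph{Bootstrap case.} Otherwise pick $j^*\in[t_1,t_2]$ with $|W_\omega(j^*)|>2^{\omega+2}$. By~(b'), $W_\omega(j^*)\subseteq\bigcup_{i=1}^{\omega}W_i(j^*)$, so Lemma~\ref{join} applied at $j'=j^*$ with $h=1$ yields a time $j''=j^*+\varphi(\omega+2)$ such that $W_\omega(j^*)\subseteq W_{\omega+2}(j)$ for every $j\in[j'',\,j''+\varphi(\omega+1)]$. Hence $|W_{\omega+2}(j)|>2^{\omega+2}$ throughout this new interval, whose length $\varphi(\omega+1)=\varphi((\omega+2)-1)$ is exactly what Lemma~\ref{iter} requires at stage $\omega+2$. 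To invoke the induction hypothesis at that stage it only remains to verify~(b') there, namely that no station lies past stage $\omega+2$ on the new interval. By the original~(b'), no station was past stage~$\omega$ at~$j^*$, so the earliest time any active station can reach stage $\omega+3$ is at least $j^*-\g_\omega+1+\g_{\omega+2}=j^*+\varphi(\omega+3)-\varphi(\omega+1)+1$, while the new interval ends at $j^*+\varphi(\omega+2)+\varphi(\omega+1)$. The elementary inequality $\varphi(\omega+3)>\varphi(\omega+2)+2\varphi(\omega+1)$, immediate from $\varphi(i)=c\cdot 2^{i}i^{1/b}\lg n$, rules this out. The induction hypothesis at stage $\omega+2$ then delivers an $\omega'$-balanced interval with $\omega+2\le\omega'\le\log k$.

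\emph{Main obstacle.} The only delicate step is exactly this preservation of~(b') across the bootstrap: the head-start of the earliest-activated stations in $W_\omega(j^*)$ has to be eaten up by their traversal of sections $\omega+1$ and $\omega+2$, which hinges solely on the geometric growth of $\varphi$. Everything else is routine bookkeeping, and the recursion terminates in at most $\lceil\log k/2\rceil$ steps since each bootstrap raises the stage by two.
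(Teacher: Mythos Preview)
Your proof is correct and follows essentially the same approach as the paper: both invoke Lemma~\ref{join} in the overpopulated case to promote to a higher stage and recurse, the only differences being that the paper chooses $h$ so that $2^{\omega+h+1}<|W_\omega(j')|\le 2^{\omega+h+3}$ (possibly skipping several stages at once) whereas you fix $h=1$, and that you verify condition~(b') at the new stage explicitly via the inequality $\varphi(\omega+3)>\varphi(\omega+2)+2\varphi(\omega+1)$ where the paper leaves it implicit in the conclusion of Lemma~\ref{join}. These are minor variations on the same argument.
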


\begin{proof}
If $|W_{\omega}(j)| \leq {2^{\omega+2}}$ for every $j\in [t_1, t_2]$ and condition (a) of Definition~\ref{bal} holds then there is nothing to prove. 
Therefore, assume that there exists $j'\in [t_1, t_2]$ such that $|W_{\omega}(j')| > {2^{\omega+2}}$.
Observe that since at most $k$ stations can be activated, we must have $\omega < \log k - 2$.
Let $h\geq 1$ be an integer such that the following inequalities hold:
\[
2^{\omega+h+1} < |W_{\omega}(j')| \leq 2^{\omega+h+3}
\ .
\]
By Lemma~\ref{join}, there exists a round $j''\geq j'$ such that, for
$j = j'', \ldots, j'' + \varphi (\omega+h)$, the inclusion
\[
     \bigcup_{i=1}^{\omega} W_{i}(j') \subseteq  W_{\omega+h+1} (j)
\] 
holds.
Therefore $|W_{\omega+h+1} (j)| \geq 2^{\omega+h+1}$ for $j = j'', \ldots, j'' + \varphi (\omega+h)$.
Let $t'_1 = j''$, $t'_2 = j'' + \varphi (\omega+h)$ and $\omega' = \omega+h+1$. We have found an interval
$[t'_1, t'_2]$ of size $\varphi (\omega'-1)$ such that for every round $j \in [t'_1, t'_2]$, 
the following conditions hold:
\begin{itemize}
  \item [1.]  $|W_{\omega'}(j)| \geq {2^{\omega'}}$;
  \item [2.] for $i>\omega'$, $|W_{i}(j)| = 0$.
\end{itemize}
If $|W_{\omega'} (j)| \leq 2^{\omega'+2}$ for every $j \in [t'_1, t'_2]$ then interval $[t'_1, t'_2]$ is $\omega'$-balanced and we are done; otherwise we repeat the same reasoning to find a new interval.
Since the number of stations that can be woken up is bounded by $k$, there must exist
an interval $[\tau_1, \tau_2]$ of size $\varphi (\iota-1)$, for some $1 \leq \iota \leq \log k$, such that
$|W_{\iota} (j)| \leq 2^{\iota+2}$ for every $j \in [\tau_1, \tau_2]$.
\end{proof}

\begin{lemma}
\label{proper}

There exists an $\omega$-balanced interval $[t_1, t_2]$,
for some $1\leq \omega \leq \log k$, such that 
\begin{equation}\label{up}
 \Bigl| \bigcup_{i=1}^{\omega-1} W_{i}(j) \Bigr| < 3\cdot |W_{\omega}(j)| 
\end{equation}
for every $j \in [t_1,t_2]$. 
\end{lemma}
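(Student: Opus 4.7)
The plan is to iteratively climb to balanced intervals at ever larger stages until inequality~(\ref{up}) holds throughout one of them, alternating Lemma~\ref{iter} (existence of a balanced interval from lower-bound hypotheses) with Lemma~\ref{fat} (promotion to a higher stage when the lower stages dominate).

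\textbf{Initial step.} I would first exhibit at least one balanced interval. Assuming $|X|\ge 2$ (a single activated station is trivially $\beta$-isolated on any channel it transmits alone, so the lemma is only meaningful in the multi-station case), at time $\sigma_X$ every station in $X$ lies in section~$1$, so $|W_1(\sigma_X)|=|X|\ge 2^1$ and $|W_i(\sigma_X)|=0$ for all $i>1$. The singleton ``interval'' $\{\sigma_X\}$, whose length matches the required $\varphi(0)=0$, then satisfies hypotheses (a') and~(b') of Lemma~\ref{iter} at $\omega=1$, furnishing an $\omega_0$-balanced interval for some $1\le \omega_0\le \log k$.

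\textbf{Iterative step.} Suppose $[t_1,t_2]$ is an $\omega$-balanced interval. If~(\ref{up}) holds for every $j\in[t_1,t_2]$ the lemma is proved. Otherwise, pick $j'\in[t_1,t_2]$ at which it fails, so $\bigl|\bigcup_{i=1}^{\omega-1}W_i(j')\bigr|\ge 3|W_\omega(j')|$. The $\omega$-balanced property already gives $|W_\omega(j')|\ge 2^\omega$, so both hypotheses of Lemma~\ref{fat} are in place; that lemma delivers an interval $[s_1,s_2]$ of length $\varphi(\omega+1)$ on which $|W_{\omega+2}(j)|\ge 2^{\omega+2}$. Feeding this interval into Lemma~\ref{iter} at stage $\omega+2$ produces a new $\omega'$-balanced interval with $\omega'\ge \omega+2$, strictly greater than the previous~$\omega$. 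Since $\omega$ is capped by $\log k$, the iteration must halt after $O(\log k)$ steps with a balanced interval on which~(\ref{up}) holds throughout.

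\textbf{Main obstacle.} The delicate point is verifying the hypothesis~(b') of Lemma~\ref{iter} when chaining at stage $\omega+2$: namely, that no station is in a section exceeding $\omega+2$ anywhere in $[s_1,s_2]$. All stations active at $j'$ lie in sections $\le\omega$ by $\omega$-balance, and stations activated after $j'$ restart in section~$1$. Because $[s_1,s_2]$ is contained in $\bigl[j'+\varphi(\omega+2),\;j'+\varphi(\omega+2)+\varphi(\omega+1)\bigr]$, I would use the explicit form $\varphi(i)=c\,2^i i^{1/b}\log n$, together with the property that $\varphi(i+1)-\varphi(i)\ge \varphi(i)$, to verify the worst-case timing inequality $\varphi(\omega+3)-\varphi(\omega)>\varphi(\omega+2)+\varphi(\omega+1)$ (which holds since the left side is roughly $7c\,2^\omega\log n$ against roughly $6c\,2^\omega\log n$ on the right). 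This ensures that even a station which just entered section~$\omega$ at time $j'$ has not reached section $\omega+3$ by time $s_2$, so (b') holds and Lemma~\ref{iter} applies at stage $\omega+2$, closing the induction.
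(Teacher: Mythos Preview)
Your approach matches the paper's: seed a first balanced interval, then alternate Lemma~\ref{fat} and Lemma~\ref{iter} to climb to strictly higher stages until~(\ref{up}) holds throughout. The paper seeds differently, taking $\iota$ with $2^\iota\le |X|\le 2^{\iota+1}$ and invoking Lemma~\ref{zigzag} to obtain a full-length interval at stage~$\iota$, which sidesteps the degenerate $\varphi(0)=0$ case; your seeding at stage~$1$ is workable but less clean.

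One slip in your obstacle paragraph: the dangerous station for~(b') is one about to \emph{leave} section~$\omega$ at time~$j'$ (active for nearly $\varphi(\omega+1)$ rounds), not one that just entered, since the former reaches section~$\omega+3$ soonest. The inequality you actually need is therefore $\varphi(\omega+3)>\varphi(\omega+2)+2\varphi(\omega+1)$, equivalently $2(\omega+3)^{1/b}>(\omega+2)^{1/b}+(\omega+1)^{1/b}$, which holds by strict monotonicity of $i\mapsto i^{1/b}$. The paper, incidentally, invokes Lemma~\ref{iter} at this step without verifying~(b') at all, so your attention to it is well placed.
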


\begin{proof}
Let us pick $\iota$ such that $1\leq \iota \leq \log k$ and $2^{\iota}\leq |X| \leq 2^{\iota+1}$. 
By Lemma~\ref{zigzag}, all stations in $X$ are in section~$\iota$, for  $j\in [\sigma_X + \g_{\iota-1}, \sigma_X + \g_{\iota}-1]$. 
Therefore the inequality $|W_{\iota}(j)| \geq |X| \geq 2^{\iota}$ holds for every 
$j\in [\sigma_X + \g_{\iota-1}, \sigma_X + \g_{\iota}-1]$.
Since there is no active station woken up before $t_X$ we also have that
$|W_{i}(j)| = 0$  for $i>\iota$ and every $j\in [\sigma_X + \g_{\iota-1}, \sigma_X + \g_{\iota}-1]$.
By Lemma~\ref{iter}, there is a $\iota^{*}$-balanced interval $[\tau_1, \tau_2]$
for some $\iota^{*}\geq \iota$.

Let us assume that \eqref{up} does not hold, otherwise we are done. Let $j' \in [\tau_1, \tau_2]$ 
be such that $ \left| \bigcup_{i=1}^{\iota^*-1} W_{i}(j') \right| \geq 3\cdot |W_{\iota^*}(j')|$.
By Lemma~\ref{fat}, there exists an interval $[t_1, t_2]$ of size $\varphi(\iota^*+1)$ 
with $t_1 \geq j'$ such that
$|W_{\iota^*+2}(j)| \geq 2^{\iota^*+2}$. Letting $\omega = \iota^*+2$, we have found an interval of size 
$\varphi (\omega-1)$ such that $|W_{\omega}(j)| \geq 2^{\omega}$. 
By Lemma~\ref{iter}, there is an $\omega'$-balanced interval for some $\omega'\geq \omega$.
This process can be iterated until a balanced interval that satisfies condition \eqref{up} is identified.
\end{proof}

\begin{lemma}\label{C64}
There exists an $\omega$-light interval $[t_1, t_2]$, for some $1\leq \omega \leq \log k$.
\end{lemma}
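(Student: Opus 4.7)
The plan is to show that the $\omega$-balanced interval $[t_1,t_2]$ whose existence is guaranteed by Lemma~\ref{proper} is in fact $\omega$-light. Condition~(1) of Definition~\ref{d:light} follows immediately: the extra inequality from Lemma~\ref{proper} gives $|\bigcup_{i=1}^{\omega-1} W_i(j)| < 3|W_\omega(j)|$ for every $j\in[t_1,t_2]$, whence $|\bigcup_{i=1}^{\omega}W_i(j)| < 4|W_\omega(j)|\le 4\cdot 2^{\omega+2}=2^{\omega+4}$, using the $\omega$-balanced bound $|W_\omega(j)|\le 2^{\omega+2}$. For the lower half of condition~(2), the inequality $\Psi(j)\ge 1$ also follows immediately: since $|W_\omega(j)|\ge 2^\omega$ by $\omega$-balancedness, we have $\Psi(j)\ge |W_\omega(j)|/2^\omega\ge 1$ for every $j\in[t_1,t_2]$.

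The substantive step is to establish that at least $\varphi(\omega-2)$ of the rounds in $[t_1,t_2]$ satisfy the upper bound $\Psi(j)\le 128\omega$. I would do this by a counting argument on $\sum_{j=t_1}^{t_2}\Psi(j)$. Swapping the order of summation yields
\[
\sum_{j=t_1}^{t_2}\Psi(j)=\sum_{u}\sum_{j\in[t_1,t_2],\, u\in W(j)}\frac{1}{2^{i(u,j)}},
\]
where $i(u,j)$ is the section number of station $u$ at round $j$. Because the interval is $\omega$-balanced, no station is in a section larger than $\omega$ during the interval, and a station spends at most $\varphi(i+1)-\varphi(i)$ rounds in section $i$. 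Hence each station's contribution is bounded by $\sum_{i=1}^{\omega}(\varphi(i+1)-\varphi(i))/2^i$, and after substituting $\varphi(i)=c\,2^{i}i^{1/b}\lg n$ this telescopes to $O(c\,\omega^{1+1/b}\lg n)$. By condition~(1) just established, the number of distinct stations contributing is at most $|W(t_2)|<2^{\omega+4}$. Multiplying, and dividing by the interval length $\varphi(\omega-1)=c\,2^{\omega-1}(\omega-1)^{1/b}\lg n$, one obtains an average value of $\Psi(j)$ of order $O(\omega)$, and with the constants from $\varphi$ tracked carefully the bound comes out below $64\omega$. Markov's inequality then gives that fewer than half of the rounds in $[t_1,t_2]$ violate $\Psi(j)\le 128\omega$, so at least $\varphi(\omega-1)/2$ rounds satisfy the upper bound. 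Since $\varphi(\omega-2)/\varphi(\omega-1)=\tfrac12((\omega-2)/(\omega-1))^{1/b}\le \tfrac12$, at least $\varphi(\omega-2)$ rounds satisfy both inequalities in~\eqref{condition}.

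The main obstacle is constant-chasing: one needs the averaged upper bound on $\Psi$ to come out strictly below $128\omega$, rather than merely $O(\omega)$, so that applying Markov's inequality at the threshold $128\omega$ still leaves at least the required $\varphi(\omega-2)$ rounds. This depends on the interplay between the coefficient $4$ in $|W(j)|<4|W_\omega(j)|$ coming from Lemma~\ref{proper}, the near-telescoping $\sum_{i=1}^{\omega}(2(i+1)^{1/b}-i^{1/b})$ that governs the per-station contribution, and the ratio $\varphi(\omega-1)/\varphi(\omega-2)\ge 2$ that determines how much slack Markov's inequality can absorb. Everything else — the verification of~(1), the lower bound in~(2), and the reduction to a single averaging computation — is routine once Lemma~\ref{proper} is in hand.
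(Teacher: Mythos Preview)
Your approach is essentially the paper's: both start from the $\omega$-balanced interval supplied by Lemma~\ref{proper}, derive condition~(1) and the lower bound $\Psi(j)\ge 1$ exactly as you do, and handle the upper bound in~\eqref{condition} by a double-counting of $\sum_j\Psi(j)$ against the cap $|\bigcup_i W_i(j)|\le 2^{\omega+4}$---the paper phrases it as a contradiction (assume the bad set $B$ is too large and derive $\max_j|U(j)|>2^{\omega+4}$) and organizes the count per section via $\varphi(i)\max_j|U(j)|\ge\sum_j|W_i(j)|$, whereas you swap to a per-station sum and invoke Markov, but the underlying estimate is identical. You are right that the constants are the only delicate point, and the paper's own bookkeeping there is no tighter than yours.
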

\begin{proof}
Let $[t_1,t_2]$ be an $\omega$-balanced interval for some $1\leq \omega\leq \log k$, whose existence is guaranteed by Lemma~\ref{proper}. 
We can assume that every $j \in [t_1,t_2]$ satisfies condition~\eqref{up}, by this very Lemma.
Moreover, since the interval is $\omega$-balanced, we also have that $|W_{\omega}(j)| \leq 2^{\omega+2}$  for every $j \in [t_1,t_2]$, by condition (a) of Definition~\ref{bal}. 
This yields
\begin{eqnarray}\label{small}
   \Bigl| \bigcup_{i=1}^{\omega} W_i(j) \Bigr| &=& \Bigl| \bigcup_{i=1}^{\omega-1} W_i(j) \Bigr| + |W_{\omega}(j)| \nonumber \\
                                            &<& 3|W_{\omega}(j)| + |W_{\omega}(j)| \nonumber \\
                                            &\leq& 4\cdot 2^{\omega+2} = 2^{\omega+4},
\end{eqnarray}
for every $j \in [t_1,t_2]$.
Thus condition~(1) of Definition~\ref{d:light} is proved.

Next, we demonstrate condition~(2). 
By condition (a) of Definition~\ref{bal}, we have that $|W_{\omega}(j)| \geq 2^{\omega}$ for every 
$j \in [t_1, t_2]$. 
Therefore the following inequalities hold for every $j \in [t_1, t_2]$:
\[
          \Psi(j) \geq \frac{|W_{\omega}(j)|}{2^{\omega}} \geq 1
          \ .
\] 
It remains to prove that the upper bound of~\eqref{condition} holds for at least $\varphi(\omega-2)$ 
time steps.
Suppose, with the goal to arrive at a contradiction, that the number of time steps $j$ in $[t_1, t_2]$ that satisfies  the rightmost inequality of condition~\eqref{condition} is less than $\varphi(\omega-2)$.
Let $B \subseteq [t_1, t_2]$ be the set of balanced time steps $j \in [t_1, t_2]$ such that
condition~\eqref{condition} is not satisfied. 
By the assumption, the following inequalities hold:
\begin{equation}\label{contra}
  |B| > |[t_1,t_2]| - \varphi(\omega-2) \geq \frac{\varphi(\omega-2)}{2}.
\end{equation}
For any $j \in [t_1, t_2]$, let us consider
\[
U(j) = \bigcup_{i=1}^{\lg n} W_{i}(j) = \bigcup_{i=1}^{\omega} W_{i}(j)
\ ,
\] 
where the second identity follows by condition (b) of Definition~\ref{bal}.
We have that $|W_i(j)| = 0$ for $\omega < i\leq \lg n$ and for every $j \in [t_1, t_2]$, because $[t_1, t_2]$ is $\omega$-balanced.
Hence all stations in $W(j)$ lie on sections $i \leq \omega$ for every $j\in [t_1, t_2]$.
By the specification of sections, a station is in section~$i$, for $1\leq i\leq \omega$,  during $\varphi(i+1) - \varphi(i) \geq \varphi(i)$  time steps.
Therefore, for every $1\leq i\leq \omega$ 
\[
   \varphi(i)\max_{t_1 \leq j\leq t_2} |U(j)| \geq \sum_{j=t_1}^{t_2} |W_{i}(j)|
                                          \geq \sum_{j\in B} |W_{i}(j)|.
\]
We continue with the following estimates:
\begin{eqnarray*}
   \sum_{i=1}^{\omega} \max_{t_1 \leq j\leq t_2} |U(j)| &\geq& \sum_{i=1}^{\omega} \sum_{j\in B} \frac{|W_{i}(j)|}{\varphi(i)} \\
                      &  = & \sum_{j\in B} \sum_{i=1}^{\log k} \frac{|W_{i}(j)|}{\varphi(i)} \\
                      &  = & \frac{1}{c\log n} \sum_{j\in B} \sum_{i=1}^{\omega} \frac{|W_{i}(j)|}{2^i\cdot i^{1/b}} \\
                      &  \geq & \frac{1}{c\log n\log^{1/b}k} \sum_{j\in B} \sum_{i=1}^{\omega} \frac{|W_{i}(j)|}{2^i} \\  
                      &  > & \frac{1}{c\log n\log^{1/b}k}  \sum_{j\in B} 128\cdot\omega \;\;\mbox{ by the assumption}\\
                      &  = & \frac{|B|\cdot 128\cdot \omega}{c\log n\log^{1/b} k}.
\end{eqnarray*}
Therefore the following inequality holds:
\[
   \max_{t_1 \leq j\leq t_2} |U(j)| > \frac{|B|\cdot 128}{c\log n\log^{1/b} k}.
\]
By \eqref{contra} we obtain that the following estimate holds:
\[
   \max_{t_1 \leq j\leq t_2} |U(j)| > \frac{2^{\omega-3} c\log n\log^{1/b} k \cdot 128}{c\log n\log^{1/b} k} = 2^{\omega+4}.
\] 
This implies that there exists $j'\in [t_1, t_2]$ such that the following inequality holds
\[
  \Bigl| \bigcup_{i=1}^{\omega} W_{i}(j') \Bigr| > 2^{\omega+4},
\]
which contradicts \eqref{small}.
\end{proof}

\begin{lemma}
\label{isolating}

Let $\beta$ be a channel, for $1\leq \beta \leq b$. 
Let every station be executing a randomized algorithm as represented by a regular randomized transmission array.
Let $[t_1, t_2]$ be a light interval.
The probability that there exists a station $w\in W(t)$ that is $\beta$-isolated  at an arbitrary time step~$j$ such that $j\leq t$ and $t_1 \leq j\leq t_2$, is at least 
\[
\frac{\Psi(j)}{\log^{\beta/b}k}\cdot 4^{-\frac{\Psi(j)}{\log^{\beta/b}k}}.
\]
\end{lemma}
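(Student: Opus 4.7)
The plan is to compute the probability $P(j)$ that some $w \in W(j)$ is $\beta$-isolated at time step $j$ directly. Two distinct stations cannot simultaneously be the unique $\beta$-transmitter, so the events ``$w$ is $\beta$-isolated at $j$'' are pairwise disjoint across $w \in W(j)$ and the probability of their union equals the sum. Writing $n_i = |W_i(j)|$ and letting $p_i = 2^{-i} i^{-\beta/b}$ denote the per-round transmission probability on channel $\beta$ for a stage-$i$ station prescribed by the regular randomized array, one obtains
\[
   P(j) \;=\; \sum_{i=1}^{\omega} n_i\, p_i\,(1-p_i)^{n_i-1}\prod_{\ell\neq i}(1-p_\ell)^{n_\ell},
\]
where the sum stops at $\omega$ because balancedness of $[t_1,t_2]$ forces $n_i=0$ for $i>\omega$ (Definition~\ref{bal}(b)). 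Factoring out $R := \prod_{\ell}(1-p_\ell)^{n_\ell}$ (by multiplying and dividing the $i$-th term by $1-p_i$) gives the clean form $P(j) = R \cdot \sum_{i} n_i\, p_i/(1-p_i)$, and since $1-p_i\le 1$, the sum factor is at least $\sum_i n_i p_i$.

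For the product factor I would use the elementary inequality $1-x \ge 4^{-x}$ for $x\in[0,1/2]$, which is verified by checking the endpoints and convexity of the difference. Since $p_\ell \le 2^{-1} = 1/2$ for every $\ell\ge 1$, raising to the $n_\ell$-th power yields $(1-p_\ell)^{n_\ell} \ge 4^{-n_\ell p_\ell}$, and hence $R \ge 4^{-\sum_\ell n_\ell p_\ell}$. Thus both the sum factor and the exponent of the product factor are controlled by the single quantity $\sum_\ell n_\ell p_\ell$.

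The next step is to tie $\sum_\ell n_\ell p_\ell$ to $\Psi(j)/\log^{\beta/b} k$. Because every nonempty stage satisfies $\ell\le\omega\le\log k$, we have $\ell^{-\beta/b} \ge (\log k)^{-\beta/b}$, so
\[
   \sum_\ell n_\ell\, p_\ell \;=\; \sum_\ell \frac{n_\ell}{2^\ell\,\ell^{\beta/b}} \;\ge\; \frac{1}{\log^{\beta/b} k}\sum_\ell \frac{n_\ell}{2^\ell} \;=\; \frac{\Psi(j)}{\log^{\beta/b} k}.
\]
This immediately lower-bounds the sum factor by $\Psi(j)/\log^{\beta/b} k$. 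Substituting into the product factor yields the claimed form $\frac{\Psi(j)}{\log^{\beta/b} k} \cdot 4^{-\Psi(j)/\log^{\beta/b} k}$.

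The main obstacle is that $\sum_\ell n_\ell p_\ell$ appears in two places with opposite preferred directions: we want it large in the sum factor but small in the exponent of the product factor. The lower bound above is straightforward, but transferring the same value into the exponent requires exploiting the $\omega$-light structure of $[t_1,t_2]$, especially Definition~\ref{d:light}(1)--(2) which controls both $|\bigcup_i W_i(j)|$ and $\Psi(j)$, together with the monotonicity behavior of the function $y\mapsto y\cdot 4^{-y}$ near the range of $x=\Psi(j)/\log^{\beta/b} k$ guaranteed by condition~\eqref{condition}. I would complete the argument by carefully matching both bounds to $x$ via these structural constraints.
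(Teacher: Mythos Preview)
Your argument follows the paper's proof essentially line for line: the same disjoint-union decomposition over stages, the same factoring out of the full product $\prod_\ell(1-p_\ell)^{n_\ell}$, the same $(1-x)\ge 4^{-x}$ estimate, arriving at $P(j)\ge S\cdot 4^{-S}$ with $S=\sum_\ell n_\ell p_\ell=\sum_\ell |W_\ell(j)|\,2^{-\ell}\ell^{-\beta/b}$. Where you pause and label the last substitution ``the main obstacle,'' the paper simply finishes: it declares $y\mapsto y\cdot 4^{-y}$ monotonically decreasing, asserts the comparison $S<\Psi(j)/\log^{\beta/b}k$, and substitutes to obtain the stated bound.

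Your hesitation is justified, and you are not missing an idea that the paper supplies. Since every nonempty stage in an $\omega$-light interval satisfies $\ell\le\omega\le\log k$, one has $\ell^{-\beta/b}\ge(\log k)^{-\beta/b}$ and hence $S\ge\Psi(j)/\log^{\beta/b}k$, which is precisely the inequality you derive; the paper's displayed inequality has the sign reversed. With the correct direction $S\ge x$ and $y\cdot 4^{-y}$ decreasing (for $y\ge 1/\ln 4$), one only gets $S\cdot 4^{-S}\le x\cdot 4^{-x}$, the wrong way. So the step you flagged is genuinely unsound as printed, and neither your appeal to the light-interval structure nor anything in the paper's text repairs it in the direction claimed. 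For what it is worth, the downstream application in Lemma~\ref{probability} needs only a bound of the shape $c\cdot\omega^{-1/b}$, and that can be recovered directly from $S\cdot 4^{-S}$ via the trivial upper bound $S\le\Psi(j)$ together with the partition into the classes $T_q$, without ever passing through the intermediate expression $\frac{\Psi(j)}{\log^{\beta/b}k}\cdot 4^{-\Psi(j)/\log^{\beta/b}k}$.
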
 

\begin{proof}
Let $E_1(\beta, i,j)$ be the event ``there exists $w\in W_{i}(j)$ such that $T(\beta, w, j) = 1$'', and
let $E_2(\beta, i,j)$ be the event ``$T(u,\beta, j) = 0$ for all $l$ with $l\not = i$ and for every $u\in W_{l}(j)$.'' 
Let us say that $W(t)$ is $\beta$-isolated at time step $j\leq t$  if and only if there exists a station
$w\in W(t)$ that is $\beta$-isolated at time step $j$.
Clearly, $W(t)$ is $\beta$-isolated at time $j$ if and only if the following event occurs:
\[ 
    \bigcup_{i=1}^{\log n} \left(E_1(\beta, i,j) \cap E_2(\beta, i,j)\right)
    \ .
\] 
We use the following estimate on probability:
\begin{eqnarray*}
\Pr ( E_1(\beta, i,j) ) 
&\geq& 
\frac{|W_{i}(j)|}{2^{i}\lpbi} \left(1-\frac{1}{2^{i}\lpbi}\right)^{|W_{i}(j)|-1} \\
&\geq&
\frac{|W_{i}(j)|}{2^{i}\lpbi} \left( 1-\frac{1}{2^{i}\lpbi}\right)^{|W_{i}(j)|}
\end{eqnarray*}
and the following identity:
\[
\Pr (E_2(\beta, i,j) ) = \prod_{l=1,l \not = i}^{\log n} \left(1-\frac{1}{2^{l} l^{\beta/b} }\right)^{|W_{l}(j)|}.
\]
Events $E_1(\beta, i,j)$ and $E_2(\beta, i,j)$ are independent, so the following can be derived:
\begin{eqnarray*}
\Pr (E_1(\beta, i,j) \cap E_2(\beta, i,j) ) 
& \geq & 
\frac{|W_{i}(j)|}{2^{i}\lpbi} \prod_{l = 1}^{\log n} \left(1-\frac{1}{2^{l}l^{\beta/b}}\right)^{|W_{l}(j)|}\\
&=& 
\frac{|W_{i}(j)|}{2^{i}\lpbi} \prod_{l = 1}^{\log n} \left(1-\frac{1}{2^{l}l^{\beta/b}}\right)^{2^{l}l^{\beta/b} \frac{|W_{l}(j)|}{2^{l} l^{\beta/b}} }\\
&\geq& 
\frac{|W_{i}(j)|}{2^{i}\lpbi} \cdot {4}^{-\sum_{l = 1}^{\log n} \frac{|W_{l}(j)|}{2^{l} l^{\beta/b}} } 
  \ .
\end{eqnarray*}
The events $E_1(\beta, i,j) \cap E_2(\beta, i,j)$ are mutually exclusive, for any fixed $j$ and all $1 \leq i \leq \lg n$.
Additionally,  $W_i(j) = \emptyset$ for all $i > \log k$, as $[t_1, t_2]$ is a light interval.
Combining all this gives
\begin{eqnarray*}
\Pr (E_1(\beta, i,j) \cap E_2(\beta, i,j) ) 
&\geq& 
\sum_{l = 1}^{\log n} \frac{|W_{l}(j)|}{2^{l} l^{\beta/b}} \cdot {4}^{-\sum_{l = 1}^{\log n} \frac{|W_{l}(j)|}{2^{l} l^{\beta/b}} } \\
& = & 
\sum_{l = 1}^{\log k} \frac{|W_{l}(j)|}{2^{l} l^{\beta/b}} \cdot {4}^{-\sum_{l = 1}^{\log k} \frac{|W_{l}(j)|}{2^{l} l^{\beta/b}} }
\ . 				
\end{eqnarray*}
Observe that the function $x\cdot 4^{-x}$ is monotonically decreasing in~$x$. 
We apply this for 
\[
x = \sum_{l = 1}^{\log k} \frac{|W_{l}(j)|}{2^{l} l^{\beta/b}}
\ .
\]
Observe that the following inequality holds:
\[
\sum_{l = 1}^{\log k} \frac{|W_{l}(j)|}{2^{l} l^{\beta/b}} 
                            < \frac{1}{\log^{\beta/b}k }\sum_{l = 1}^{\log k} \frac{|W_{l}(j)|}{2^{l}}
                            \ .
\]
Combining these facts together justifies the following estimates
\begin{eqnarray*}
\Pr (E_1(\beta, i,j) \cap E_2(\beta, i,j) )
& \geq & 
\frac{1}{\log^{\beta/b}k }\sum_{l = 1}^{\log k} \frac{|W_{l}(j)|}{2^{l}} \cdot {4}^{-\frac{1}{\log^{\beta/bk}}\sum_{l = 1}^{\log k} \frac{|W_{l}(j)|}{2^{l}} } \\
& \geq & 
\frac{\Psi(j)}{\log^{\beta/b}k} \cdot 4^{\frac{\Psi(j)}{\log^{\beta/b}k}}
\ ,
\end{eqnarray*}
which completes the proof.
\end{proof}

\begin{lemma}
\label{probability}

Let every station be executing a randomized algorithm as represented by a regular randomized transmission array.
There exists an $\omega$-light interval $[t_1, t_2]$, for some $1\leq \omega \leq \log k$, that
contains at least $\varphi(\omega-2)$ time steps $j\in [t_1, t_2]$ such that 
the probability that there exists a station $w\in W(j)$ isolated at time $j$ is at least 
\[
         \frac{1}{4^{128}} \cdot \omega^{1/b} .
\]
\end{lemma}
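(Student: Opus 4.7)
The plan is to combine Lemma~\ref{C64} with Lemma~\ref{isolating}. Lemma~\ref{C64} produces an $\omega$-light interval $[t_1,t_2]$ for some $1\le\omega\le\log k$, and condition~(2) of Definition~\ref{d:light} guarantees that this interval contains at least $\varphi(\omega-2)$ time steps~$j$ with $1\le\Psi(j)\le 128\omega$. It therefore suffices to show that every such~$j$ has isolation probability at least $\omega^{1/b}/4^{128}$; no additional selection of time steps is required.

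Fix such a~$j$. Because the Bernoulli transmission indicators $T(u,\beta,\cdot)$ are mutually independent across channels~$\beta$, the events $E_\beta=$``some station is $\beta$-isolated at~$j$'' are independent across $\beta\in[1,b]$. Hence
\[
\Pr\!\Bigl(\bigcup_{\beta=1}^{b} E_\beta\Bigr) \;=\; 1-\prod_{\beta=1}^{b}\bigl(1-\Pr(E_\beta)\bigr) \;\ge\; \max_{1\le\beta\le b} \Pr(E_\beta)\ ,
\]
and Lemma~\ref{isolating} gives $\Pr(E_\beta)\ge x_\beta\cdot 4^{-x_\beta}$ with $x_\beta=\Psi(j)/\log^{\beta/b}k$. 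The task reduces to exhibiting a single channel~$\beta^*$ for which $x_{\beta^*}\in[\omega^{1/b},128]$: the upper limit $128$ makes $4^{-x_{\beta^*}}\ge 4^{-128}$, while the lower limit $\omega^{1/b}$ supplies the matching factor, together yielding $x_{\beta^*}\cdot 4^{-x_{\beta^*}}\ge\omega^{1/b}/4^{128}$.

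I would take $\beta^*$ to be the smallest integer in $[1,b]$ with $\log^{\beta^*/b}k\ge\Psi(j)/128$; this instantly gives $x_{\beta^*}\le 128$. For the lower bound $x_{\beta^*}\ge\omega^{1/b}$ I plan to use the geometric spacing (common ratio $\log^{1/b}k$) of the values $\log^{\beta/b}k$ together with the inequalities $\Psi(j)\le 128\omega$ and $\omega\le\log k$, which force the chosen $\log^{\beta^*/b}k$ into the desired window up to the spacing factor. The main obstacle will be the degenerate case $\beta^*=1$, triggered when $\Psi(j)$ is very close to~$1$: there the naive estimate yields only $x_{\beta^*}\ge 1/\log^{1/b}k$, and to recover the target factor~$\omega^{1/b}$ I anticipate needing to switch from the max bound to the full independence inequality $1-\prod_\beta(1-\Pr(E_\beta))\ge 1-\exp(-\sum_\beta\Pr(E_\beta))$, using the geometric tail $\sum_\beta \log^{-\beta/b}k \approx \log^{-1/b}k$ to aggregate the per-channel contributions and close the gap.
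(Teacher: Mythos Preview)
Your plan has the right architecture --- invoke Lemma~\ref{C64}, then at each qualifying time step select a channel depending on the size of~$\Psi(j)$ and apply Lemma~\ref{isolating} --- and this is exactly what the paper does. The genuine gap is that you are chasing the bound as literally printed, namely $4^{-128}\cdot\omega^{1/b}$, whereas the quantity the paper's own proof establishes (and the quantity actually used downstream in Lemma~\ref{l:fraction}) is $1/(4^{128}\,\omega^{1/b})$, with $\omega^{1/b}$ in the \emph{denominator}. The displayed bound in the lemma statement is a typo: for large~$\omega$ it would exceed~$1$. This is precisely why you cannot force $x_{\beta^*}\ge\omega^{1/b}$ --- when $\Psi(j)$ is near~$1$ every $x_\beta=\Psi(j)/\log^{\beta/b}k$ is at most~$1$ --- and why your independence-based fallback does not close the gap either: summing the geometric tail $\sum_\beta\log^{-\beta/b}k$ yields at best order $1/\log^{1/b}k\le 1/\omega^{1/b}$, not $\omega^{1/b}$.

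Once the target is corrected to $1/(4^{128}\omega^{1/b})$, your selection rule becomes the paper's argument. The paper partitions the set $T$ of good time steps as $T_1=\{j:1\le\Psi(j)\le 128\,\omega^{1/b}\}$ and $T_q=\{j:128\,\omega^{(q-1)/b}<\Psi(j)\le 128\,\omega^{q/b}\}$ for $2\le q\le b$, and on $T_q$ uses channel~$\beta=q$. In an $\omega$-light interval all active stations lie in stages $\le\omega$, so the estimate of Lemma~\ref{isolating} holds with $\omega$ in place of $\log k$; writing $x_q=\Psi(j)/\omega^{q/b}$, the upper threshold gives $x_q\le 128$ and the lower threshold (for $q=1$, the hypothesis $\Psi(j)\ge 1$) gives $x_q\ge 1/\omega^{1/b}$. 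Hence $x_q\cdot 4^{-x_q}\ge (1/\omega^{1/b})\cdot 4^{-128}$. Your rule ``smallest $\beta$ with $\log^{\beta/b}k\ge\Psi(j)/128$'' is this same partition in disguise; the only repairs needed are to base the thresholds on~$\omega$ rather than $\log k$, and to aim for $x_{\beta^*}\in[1/\omega^{1/b},128]$ rather than $[\omega^{1/b},128]$. No appeal to independence across channels is then required.
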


\begin{proof}
By Lemma~\ref{C64}, there exists an $\omega$-light interval $[t_1, t_2]$, for some $1\leq \omega \leq \log k$.
There are at least $\varphi(\omega-2)$ time steps $j\in [t_1, t_2]$ with
$1 \leq \Psi(j) \leq 128\cdot \omega$. 
Let $T$ be the set of such time steps~$j$. 
We define sets $T_i$ as follows, for $1\le i\le b$:
\[
T_1 =  \{ j \in T |\ 1 \leq \Psi(j) \leq 128\cdot\omega^{1/b}  \}
\]
 and, for $q = 2, \ldots, b$,
\[
T_{q} = \{ j \in T |\ 128\cdot\omega^{(q-1)/b} < \Psi(j) \leq 128\cdot\omega^{q/b} \}\ .
\] 
It  suffices to show that for every time step $t\in T$ there exists a channel $\beta$, for $1\leq \beta \leq b$, such that the probability of $\beta$-isolating a station $w\in W(t)$ at time $t$ is at least 
\[
         \frac{1}{4^{128} \cdot \omega^{1/b} }
         \ .
\]
Let us consider a time step $t\in T$, so that $t \in T_q$ for some $1\leq q\leq b$.
By Lemma~\ref{isolating}, we have that if $t \in T_1$ then the probability that a station is $1$-isolated at a time step~$t$ is at least
\[
       \frac{\Psi(j)}{\lpbo}\cdot 4^{-\frac{\Psi(j)}{\lpbo}} > \frac{1}{\lpbo}\cdot 4^{-128\frac{\lpbo}{\lpbo}} 
       = \frac{1}{4^{128} \cdot \lpbo}
       \ .
\]
If $t\in T_\beta$, for $2\leq \beta \leq b$, then the probability that a station is $\beta$-isolated at time step $t$ is at least 
\[
       \frac{\Psi(j)}{\lpbo}\cdot 4^{-\frac{\Psi(j)}{\lpbo}} > \frac{128}{\lpbo}\cdot 4^{-\frac{\Psi(j)}{\lpbo}}
       = \frac{128}{4^{128} \cdot \lpbo}
       \ ,
\]
which completes the proof.
\end{proof}

\begin{lemma}
\label{l:t2}
Let $s$ be the time at which the first station wakes up and
let $[t_1,t_2]$ be an $\omega$-light interval, for some $\omega\le \log k$.
Then $t_2\le s + \varphi(\omega+1)$.
\end{lemma}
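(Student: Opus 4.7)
The plan is to exploit the structural constraint that $[t_1,t_2]$ is $\omega$-balanced, which is inherited by any $\omega$-light interval (Definition~\ref{d:light} builds on Definition~\ref{bal}). In particular, condition (b) of Definition~\ref{bal} says that $|W_i(j)|=0$ for every stage $i>\omega$ and every $j\in[t_1,t_2]$. Combined with the identity $\bigcup_{i=1}^{\lg n} W_i(j) = W(j)$ noted right after the definition of sections, this forces every station active at any $j\in[t_1,t_2]$ to lie in some stage $i\le \omega$.

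I would then apply this to the station $u^\ast$ that is first to wake up, which by assumption has $\sigma_{u^\ast}=s$. Since $s\le t_1\le t_2$, station $u^\ast$ is active at time $t_2$ and therefore belongs to $W_i(t_2)$ for some $i\le \omega$. By the specification of sections given in Section~\ref{sec:deterministic-oblivious-algorithms}, a station $u$ is in stage $i$ at a global time step $j$ precisely when $\varphi(i)\le j-\sigma_u \le \varphi(i+1)-1$. Specializing this to $u^\ast$ at $j=t_2$ with $i\le \omega$ yields $t_2-s\le \varphi(i+1)-1\le \varphi(\omega+1)-1$, which immediately gives the desired bound $t_2\le s+\varphi(\omega+1)$.

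I do not expect any real obstacle here: the lemma is essentially a bookkeeping statement saying that the first-activated station cannot be in a stage higher than $\omega$ while condition~(b) holds, and this upper-bounds how long the interval can extend past $s$. The only point requiring mild care is invoking the identity $\bigcup_i W_i(j)=W(j)$ to guarantee that $u^\ast$ is placed in some $W_i(t_2)$ (so we can then apply condition~(b) to restrict $i\le \omega$); this identity is the convention fixed in Section~\ref{sec:deterministic-oblivious-algorithms}, so no extra argument is needed.
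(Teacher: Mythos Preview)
Your proposal is correct and follows essentially the same approach as the paper's proof: both arguments observe that an $\omega$-light interval is $\omega$-balanced, invoke condition~(b) of Definition~\ref{bal} at the time step~$t_2$ to place every active station (in particular the first-activated one) in some stage~$i\le\omega$, and then use the section-length bookkeeping $\gamma_\omega=\varphi(\omega+1)$ to bound $t_2-s$. Your write-up is slightly more explicit in invoking the identity $\bigcup_i W_i(j)=W(j)$ and in spelling out the characterization of stage membership, but the logic is identical.
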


\begin{proof}
The interval $[t_1,t_2]$ is $\omega$-balanced, by Definition~\ref{d:light}.
We have that $W_j(t_2)$ is empty for every~$j$ such that $j>\omega$, by Definitions~\ref{baltime} and~\ref{bal}(b).
This means that no station is in a section bigger than~$\omega$, including those activated first at  time step~$s$.
Each station is activated after at most $\varphi(\omega+1)$ time steps because  the sum of the lengths of the first $i$ sections is $\g_i = \varphi(i+1)$.
\end{proof}

\begin{lemma}
\label{l:fraction}

Let $c$ in the definition of $\varphi$ be bigger than some sufficiently large constant.
There exists a waking array of length $2cn\log n\log^{1/b} k$ such that, for any transmission array,  there is an integer $0\le\omega\le\log k$ with the following  properties: 
\begin{enumerate}
\item[\rm (1)] 
There are at least $c\cdot 2^{\omega-259}\log n$ isolated positions by time $c\cdot 2^{\omega+1}\log n\log^{1/b} k$ .

\item[\rm (2)]
At least $c\cdot 2^{\omega-259}\log n$ isolated positions occur at time steps with at least $2^{\omega}$ but no more than~$2^{\omega+4}$ activated stations. 
\end{enumerate}
\end{lemma}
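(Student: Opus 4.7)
The plan is to apply the probabilistic method to a regular randomized transmission array $\cT$ of length $L=2cn\log n\log^{1/b}k$, showing that the expected number of isolated positions is already at least the claimed count, and then extract a deterministic realization. Fix an arbitrary activation pattern in which at most $k$ stations become active; by Lemma~\ref{C64}, there exists $1\le\omega\le\log k$ and an $\omega$-light interval $[t_1,t_2]$ for this pattern.

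For each time step $j\in[t_1,t_2]$, let $X_j$ be the indicator of the event that some station in $W(j)$ is isolated at $j$. By Lemma~\ref{probability}, there is a subset $T\subseteq[t_1,t_2]$ with $|T|\ge\varphi(\omega-2)$ on which $\Pr[X_j=1]\ge \omega^{1/b}/4^{128}$. Linearity of expectation and the definition $\varphi(\omega-2)=c\cdot 2^{\omega-2}(\omega-2)^{1/b}\log n$ then give
\[
\mathbb{E}\Bigl[\sum_{j\in T}X_j\Bigr]\;\ge\;\varphi(\omega-2)\cdot\frac{\omega^{1/b}}{4^{128}}\;\ge\;c\cdot 2^{\omega-259}\log n,
\]
after absorbing $4^{128}=2^{256}$ and lower-order factors into the exponent. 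Property~(2) comes for free from the structure of an $\omega$-light interval: by condition~(1) of Definition~\ref{d:light} combined with Definition~\ref{bal}(a), every $j\in[t_1,t_2]$ satisfies $2^{\omega}\le|W_{\omega}(j)|\le|W(j)|=\bigl|\bigcup_{i=1}^{\omega}W_i(j)\bigr|\le 2^{\omega+4}$, so any isolated position produced inside the light interval meets the stated population bound. The time bound of property~(1) follows from Lemma~\ref{l:t2}: the interval ends by $s+\varphi(\omega+1)\le c\cdot 2^{\omega+1}\log n\log^{1/b}k$ after the first spontaneous activation.

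The remaining step, and the main obstacle, is the derandomization: turning the expectation bound into a single deterministic $\cT$ that works uniformly over every activation pattern. Observe first that the indicators $\{X_j\}_{j\in T}$ depend on disjoint sets of entries of $\cT$ (namely the bits at row offsets $j-\sigma_u$ for active stations $u$), so they are mutually independent and a Chernoff tail bound gives failure probability exponentially small in $c\cdot 2^{\omega-259}\log n$. A naive union bound over all activation patterns is too weak, so the idea is to parameterize a pattern by the first-activation time $s$ together with the at most $n$ subsequent activation offsets in the window of length $L$, bounding the count of relevant patterns by a mild function of $n$ and $L$; taking $c$ sufficiently large makes Chernoff dominate this union cost for every $\omega\le\log k$ simultaneously, yielding the required deterministic waking array.
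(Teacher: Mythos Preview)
Your overall strategy matches the paper's: use the probabilistic method on a regular randomized array, obtain an expectation bound from Lemma~\ref{probability} on an $\omega$-light interval guaranteed by Lemma~\ref{C64}, apply a Chernoff tail, and then union-bound over activation patterns. Your verification that the indicators $X_j$ depend on disjoint array entries (hence are independent) and your derivation of properties~(1) and~(2) from Lemma~\ref{l:t2} and the definition of an $\omega$-light interval are fine.

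The genuine gap is in the union bound. You correctly note that a naive union over all activation patterns fails, but your proposed fix --- parameterizing by the first activation time together with ``the at most $n$ subsequent activation offsets in the window of length $L$'' --- still yields on the order of $L^n$ patterns. The Chernoff exponent you have is $\Theta(c\cdot 2^{\omega}\log n)$; for small $\omega$ (say $\omega=\cO(1)$) this cannot absorb a cost of $n\log L$, no matter how large a constant $c$ you choose. The key observation you are not using, even though you invoked it two paragraphs earlier for property~(2), is that inside the $\omega$-light interval at most $2^{\omega+4}$ stations are active. The paper therefore first associates each activation pattern with its (smallest) $\omega$ and the corresponding $\omega$-light interval $[t_1,t_2]$, and then bounds the number of patterns in that class by $\binom{n}{2^{\omega+4}}\cdot t_2^{\,2^{\omega+4}}=\exp\bigl(\cO(2^{\omega}\log n)\bigr)$. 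This count matches the Chernoff exponent, so a sufficiently large $c$ wins; a final union over the $\cO(L^2)$ choices of $[t_1,t_2]$ costs only a polynomial factor. Without restricting the station count to $2^{\omega+4}$, the derandomization step as you wrote it does not go through.
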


\begin{proof}
Consider a regular randomized transmission array, as defined in Definition~\ref{def:array}.
Assume also a sufficiently large $c>0$ in the definition of $\varphi(i)= c\cdot 2^{i} \log n\cdot i^{1/b}$, for any $1\le i\le \log n$. 
Consider an activation pattern, with the first activation at point zero.
By Lemma~\ref{probability}, there is $\omega\le \log k$ and an $\omega$-light interval $[t_1,t_2]$ such that there are at least $\varphi(\omega-2)$ time steps $j\in [t_1, t_2]$, where the probability that a station $w\in W(j)$ isolated at time step~$j$ exists is at least~$1/(4^{128}\lpbou)$.
We choose the smallest such $\omega$ and associate the corresponding $\omega$-light interval $[t_1,t_2]$ with the activation pattern.
Note that we can partition all activation patterns into disjoint classes based on the intervals associated with them.
The expected number of isolated positions in the $\omega$-light interval $[t_1,t_2]$ is at least 
\[
\varphi(\omega-2) \cdot \frac{1}{4^{128} \lpbou}
\geq
\varphi(\omega-2)\cdot \frac{1}{4^{128} (\omega-2)^{1/b} } \cdot\frac{(\omega-2)^{1/b}}{\lpbou}
\geq
c\cdot 2^{\omega -258} \log n
\ ,
\]
where we take $\omega\ge 3$.
By the Chernoff bound, the probability that the number of isolated positions is smaller than $c\cdot 2^{\omega -259} \log n$ is at most $\exp(-c\cdot 2^{\omega -261} \log n)$.

We want to apply the argument of the probabilistic method to the class of activation patterns associated with the $\omega$-light time interval $[t_1,t_2]$. 
To this end, we need an estimate from above of the number of all such activation patterns.
By Lemma~\ref{l:t2}, the rightmost end~$t_2$ of this time interval is not bigger than 
\[
\varphi(\omega+1) \le c\cdot 2^{\omega+1}\log n \log^{1/b} k\ .
\]
There are no more than  $2^{\omega+4}$ stations activated by time step~$t_2$, because $[t_1,t_2]$ is $\omega$-light. 
The number of different activation patterns in the class associated with the 
$\omega$-light interval $[t_1,t_2]$ is at most  $\binom{n}{2^{\omega+4}}(t_2)^{2^{\omega+4}}$.
This quantity can be estimated from above as
\begin{eqnarray*}
\left(\frac{ne}{2^{\omega+4}}\right)^{2^{\omega+4}}
\bigl(c\cdot 2^{\omega+1}\log n \log^{1/b} k \bigr)^{2^{\omega+4}} 
&=&
\exp\bigl(2^{\omega+4}\cdot \ln ((ce/8)\cdot n \log n \log^{1/b} k)\bigr)\\
&\le&
\exp\left(3\ln c \cdot 2^{\omega+4}\cdot \log n \right)
\ .
\end{eqnarray*}
This bound is smaller than $\exp(c\cdot 2^{\omega -261} \log n - 4\log(2cn \log n \log^{1/b} k))$ for a sufficiently large constant~$c$.
We combine the following two bounds:
\begin{itemize}
\item
this upper bound $\exp(c\cdot 2^{\omega -261} \log n - 4\log(2cn \log n \log^{1/b} k))$  on the number all activation patterns in the class associated with the $\omega$-light time interval $[t_1,t_2]$,  with
\item
the upper bound  $\exp(-c\cdot 2^{\omega -261} \log n)$  on the probability that for any fixed such activation pattern the number of isolated positions is smaller than $c\cdot 2^{\omega -259} \log n$.
\end{itemize}
We conclude that the probability of the event that there is an activation pattern associated with the $\omega$-light time interval $[t_1,t_2]$ with less than $c\cdot 2^{\omega -259} \log n$  isolated positions, is smaller than 
\[
\exp(c\cdot 2^{\omega -261} \log n - 4\log(2cn \log n \log^{1/b} k)) \cdot 
\exp(-c\cdot 2^{\omega -261} \log n)
=
\exp(- 4\log(2cn \log n \log^{1/b} k))
\ .
\]
Finally, observe that there are at most $2cn \log n \log^{1/b} k$ candidates for time step $t_1$
and also for $t_2$, by Lemma~\ref{l:t2} and the bound $\omega\le \log n$.
Hence, applying the union bound to the above events over all such feasible intervals, we obtain that the probability of the event that there is $\omega\le \log n$ and an activation pattern associated with some  $\omega$-light time interval $[t_1,t_2]$ with less than $c\cdot 2^{\omega -259} \log n$ isolated positions is smaller than 
\[
\exp(- 4\log(2cn \log n \log^{1/b} k)) \cdot  (2cn \log n \log^{1/b} k)^2 < 1/n^2 \le   1
\ .
\]
By the probabilistic-method argument, there is an instantiation of the random array, which is a regular array, for which the complementary event holds.

Note that more than the fraction  $1-1/n^2$ of random arrays defined in the beginning of the proof satisfy the complementary event.
Hence, this array satisfies Claim (1) with respect to any activation pattern.
Claim (2) follows by noticing that these occurrences of isolated positions take place in the corresponding $\omega$-light interval.
The interval, by definition, has no more than  $2^{\omega+4}$ stations activated by its end, and at least $2^{\omega}$ activated stations in the beginning.
This is because $\omega$-light interval is by definition an $\omega$-balanced interval, according to Definitions~\ref{bal}, \ref{baltime} and~\ref{d:light}.
\end{proof}

\Paragraph{Proof completed.}

We conclude with a proof of Theorem~\ref{thm:general-deterministic}.
There is an isolated position for every activation pattern by time $\cO(k\log n\log^{1/b} k)$.
This follows from point (1) of Lemma~\ref{l:fraction}.
To see this, notice that otherwise the $\omega$-light interval, which is also $\omega$-balanced, would have at least $2^\omega > k$ stations activated, by Definitions~\ref{bal} and~\ref{baltime}, contradicting the assumption.
Theorem~\ref{thm:general-deterministic} is thereby proved.

\Paragraph{Channels with random jamming.}

In the final part of this Section, we consider a model of a network in which channels may get jammed.
Assume that at each time step and on every channel a jamming error occurs with the probability~$p$, for $0\le p<1$, independently over time steps and channels. 
When a channel is jammed then the feedback it provides to the stations is the same as if there were a collision on this channel.
The case $p=0$ is covered by Theorem~\ref{thm:general-deterministic}.


\begin{theorem}
\label{thm:randomized-jamming}

For a given error probability $0< p<1$, there exists a waking array of  length $\cO(\log^{-1}(\frac{1}{p})\,n\log n\log^{1/b} k)$ providing wake-up in  $\cO(\log^{-1}(\frac{1}{p}) \, k\log n\log^{1/b} k)$ time, for any number $k\le n$ of spontaneously activated stations, with a probability that is at least $1 - 1/\text{\emph{poly}}(n)$.
\end{theorem}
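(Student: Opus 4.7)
The plan is to amplify the waking array constructed in Theorem~\ref{thm:general-deterministic} by a factor $\alpha = \lceil 1/\log_2(1/p) \rceil$, so that each activation pattern produces enough isolated positions to survive random jamming by independence. The two sub-steps are: (i)~construct the amplified array and lower-bound the number of isolated positions it yields for any activation pattern, and (ii)~bound the probability that all of those positions are jammed.

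For step~(i), I would repeat the construction of Section~\ref{sec:deterministic-oblivious-algorithms} after replacing the section-length schedule $\varphi(i)=c\cdot 2^i\log n\cdot i^{1/b}$ with $\tilde{\varphi}(i)=\alpha\,\varphi(i)$, while leaving the per-round transmission probabilities in Definition~\ref{def:array} unchanged. The structural Lemmas~\ref{zigzag}--\ref{C64} depend on $\varphi$ only through the relative lengths of sections, so they transfer verbatim. The isolation-probability bound of Lemma~\ref{isolating} is also unaffected. Reproducing the probabilistic-method argument of Lemma~\ref{l:fraction} with $\tilde{\varphi}$ in place of $\varphi$, both the expected number of isolated positions in an $\omega$-light interval and the Chernoff lower-tail exponent scale up by~$\alpha$; an $\cO(\log\alpha)$ bump in the leading constant~$c$ absorbs the corresponding extra factor in the enumeration of activation patterns within each class. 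The outcome is a deterministic waking array of length $\cO(\alpha\, n\log n \log^{1/b}k)$ guaranteeing, for every activation pattern with at most $k$ stations, an integer $\omega\le\log k$ and at least $N\ge c'\alpha\cdot 2^\omega\log n$ isolated positions by time $\cO(\alpha\cdot 2^\omega\log n\log^{1/b}k)=\cO(\log^{-1}(1/p)\cdot k\log n\log^{1/b}k)$.

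For step~(ii), fix the deterministic array above and any activation pattern. By hypothesis, each (channel, round) pair is jammed independently with probability~$p$. Since distinct isolated positions correspond to distinct (channel, round) pairs, the probability that every one of the $N$ isolated positions is jammed is at most
\[
p^{N} \le p^{c'\alpha\log n} = 2^{-c'\alpha\log_2(1/p)\log n} \le 2^{-c'\log n},
\]
where the last inequality uses $\alpha\log_2(1/p)\ge 1$. Taking $c'$ large enough makes this at most $1/\mbox{poly}(n)$. On the complementary event, at least one isolated position escapes jamming and triggers wake-up, within the time bound already established in step~(i).

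The main obstacle is step~(i): the union-bound and Chernoff balance in the proof of Lemma~\ref{l:fraction} is tight up to constants, so the $\alpha$-scaling must not consume the slack. The saving grace is that the Chernoff exponent and the logarithm of the number of activation patterns in a class both scale proportionally to $\alpha\cdot 2^\omega\log n$, so an additive $\cO(\log\alpha)$ correction in the leading constant preserves the argument. A slightly lossier but more transparent alternative is to concatenate $\alpha$ independent copies of the array from Theorem~\ref{thm:general-deterministic} back-to-back; each copy then provides an independent opportunity for an isolated position to escape jamming, and the same geometric bound $p^{\Omega(\alpha\log n)}$ on simultaneous jamming failure applies.
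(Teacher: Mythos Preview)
Your proposal is correct and follows essentially the same route as the paper. The paper's proof simply sets $c=c'\cdot\log^{-1}(1/p)$ in the definition of~$\varphi$ and invokes Lemma~\ref{l:fraction} directly (that lemma is already stated for any sufficiently large~$c$), then bounds the probability that all $c\cdot 2^{\omega-259}\log n$ isolated positions are jammed by $p^{c\cdot 2^{\omega-259}\log n}\le 1/\mbox{poly}(n)$; your $\tilde\varphi=\alpha\varphi$ is exactly this rescaling of~$c$, spelled out more explicitly.
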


\begin{proof} 
Let us set $c=c'\cdot\log^{-1}\frac{1}{p}$ for sufficiently large constant $c'$, and consider any activation pattern.
By Lemma~\ref{l:fraction}, at least $c\cdot 2^{\omega-259}\log n$ isolated positions occur by time $c\cdot 2^{\omega+1}\log^{1+1/b} n$ and by that time no more than $2^{\omega+4}$ stations are activated.
Each such an isolated position can be jammed independently with probability~$p$.
Therefore, the probability that all these positions are jammed, and thus no successful transmission 
occurs by time 
\[
c\cdot 2^{\omega+1}\log n\log^{1/b} k = \cO(\log^{-1}\Bigl(\frac{1}{p}\Bigr) \, k \log n\log^{1/b} k)\ ,
\]
is at most
\[
p^{c\cdot 2^{\omega-259}\log n}
=
\exp\left(c'\cdot \log^{-1}\Bigl(\frac{1}{p}\Bigr) \cdot 2^{\omega-259}\log n\cdot \ln p\right)
\ .
\]
This is smaller than $1/\mbox{poly}(n)$ for sufficiently large constant $c'$.
Here we use the fact that $\frac{\ln p}{\log(1/p)}$ is negative  for $0<p<1$.
When estimating the time of a successful wake-up, we relied on the fact that  $2^{\omega}$, which is the lower bound on the number of activated stations  by Lemma~\ref{l:fraction}(2), must be smaller than~$k$.
\end{proof}

\section{A Specialized Deterministic Algorithm}

\label{sec:large}

We give a deterministic algorithm that has a better time-performance bound than the one given in Theorem~\ref{thm:general-deterministic}.
The construction applies to networks with sufficiently many channels with respect to the number of nodes.
The main fact proved in this Section is as follows:


\begin{theorem}
\label{thm:many-channels}

If the numbers of channels~$b$ and nodes~$n$ satisfy $b>\lbl$ then there exists a deterministic waking array $\cT$ of length $\cO(\frac{n}{b}\log n\log(b\log n))$ which, when used to instantiate the generic algorithm \textsc{Wake-Up}, produces an algorithm \textsc{Wake-Up}\,$(\cT)$ that wakes up the network in  time $\cO(\frac{k}{b}\log n\log(b\log n))$, for up to $k\le n$ stations  activated spontaneously.
\end{theorem}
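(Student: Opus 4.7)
The plan is to parallel Section~5 using the modified randomized transmission array from Section~4, the rescaled section function $\varphi(i) = c(2^i/b)\log n \log(128b\log n)$, and the modified light-interval condition~\eqref{newcondition} which replaces $128\omega$ with $128\log n$. The hypothesis $b > \log(128b\log n)$ is what guarantees that as $\beta$ ranges over $[1,b]$ the reduced exponent $\beta^*=\beta\bmod\log(128b\log n)$ cycles through every residue, so that for any value of $\Psi(j)$ in a light interval there is at least one channel whose transmission probabilities are tuned to isolate a station with constant probability.

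I would first verify that Lemmas~\ref{zigzag} through~\ref{proper} carry over verbatim: they only use that $\varphi(i+1)-\varphi(i)\ge\varphi(i)$ and pigeonhole-type accounting on $|W_i(j)|$, both preserved under the new scaling. The analog of Lemma~\ref{C64} requires only a routine recomputation: substituting the new $\varphi$ into the final chain of inequalities replaces the factor $1/(c\log n\log^{1/b}k)$ with $b/(c\log n\log(128b\log n))$, so the threshold at which a contradiction arises becomes $128\log n$ in place of $128\omega$. This establishes the existence of an $\omega$-light interval $[t_1,t_2]$ of length $\varphi(\omega-2)$ containing at least $\varphi(\omega-2)$ time steps that satisfy~\eqref{newcondition}.

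The core new technical step is the isolation lemma. Fixing $j$ in the light interval and a channel $\beta$, the probability that some station is $\beta$-isolated at time $j$ reduces, by the same derivation as in Lemma~\ref{isolating}, to at least $b\cdot 2^{-\beta^*}\Psi(j)\cdot 4^{-b\cdot 2^{-\beta^*}\Psi(j)}$. Because $x\cdot 4^{-x}$ is maximized at $x=\Theta(1)$ and the values $2^{-\beta^*}$ sweep geometrically from $1$ down to about $1/(128b\log n)$ as $\beta$ ranges over $[1,b]$, the hypothesis $b>\log(128b\log n)$ ensures that for every $\Psi(j)\in[1,128\log n]$ some channel places $b\cdot 2^{-\beta^*}\Psi(j)$ in a constant window around~$1$, giving isolation probability~$\Omega(1)$. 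Multiplying by the $\varphi(\omega-2)=\Omega((2^\omega/b)\log n\log(b\log n))$ candidate time steps yields an expected number of isolated positions of order $(2^\omega/b)\log n\log(b\log n)$.

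Finally I would replay the probabilistic-method counting of Lemma~\ref{l:fraction}: the logarithm of the number of activation patterns associated with an $\omega$-light interval is $O(2^{\omega+4}\log n)$ as before, while the new expected isolated-position count dominates it once $c$ is chosen large enough. A Chernoff bound followed by a union bound over $\omega\le\log k$ and the endpoint $t_1$ produces a deterministic waking array of length $\varphi(\log n+1)=O((n/b)\log n\log(b\log n))$ whose first isolated position always appears by stage $\log k$, yielding wake-up in time $\varphi(\log k+1)=O((k/b)\log n\log(b\log n))$. The delicate point is the isolation calculation, since it is the only place where $b>\log(128b\log n)$ enters, and it must now deliver a constant isolation probability rather than the $\omega^{-1/b}$ factor that was sufficient in Section~5 — this is precisely what converts the $\log^{1/b}k$ factor in the Section~5 bound into the $\log(b\log n)/b$ saving of Theorem~\ref{thm:many-channels}.
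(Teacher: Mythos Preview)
Your plan tracks the paper's proof closely and is essentially correct in structure, but there is one genuine gap in the isolation step that breaks the probabilistic-method counting.

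You argue that for each time step $j$ in the light interval, \emph{some} channel $\beta$ has $b\cdot 2^{-\beta^*}\Psi(j)=\Theta(1)$ and hence gives constant isolation probability. You then multiply only by $\varphi(\omega-2)$ time steps to get an expected $\Theta\bigl(c\,(2^\omega/b)\log n\log(b\log n)\bigr)$ isolated positions. But the log of the number of activation patterns associated with an $\omega$-light interval is $\Theta(2^{\omega}\log n)$, independent of $b$. Your expected count falls short of this by a factor $\log(b\log n)/b$, which goes to $0$ as $b$ grows, so no universal constant $c$ can make the Chernoff exponent dominate the union-bound exponent.

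The missing observation is that for the residue $q$ determined by $\Psi(j)$, there are not one but $\bigl\lfloor b/\log(128b\log n)\bigr\rfloor$ channels $\beta\in[1,b]$ with $\beta^*=\beta\bmod\log(128b\log n)=q$, and \emph{each} of them gives isolation probability at least $1/8$. Multiplying $\varphi(\omega-2)$ by this channel count yields
\[
\varphi(\omega-2)\cdot\Bigl\lfloor\frac{b}{\log(128b\log n)}\Bigr\rfloor\cdot\frac{1}{8}\;=\;\Theta(c\,2^\omega\log n)
\]
expected isolated positions, with the $b$ and $\log(b\log n)$ factors cancelling. This is exactly the quantity needed to beat the $\exp(O(2^\omega\log n))$ activation patterns. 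The hypothesis $b>\log(128b\log n)$ is used not merely to guarantee that one good residue is hit, but that each residue is hit $\lfloor b/\log(128b\log n)\rfloor\ge 1$ times; it is this multiplicity that buys the $1/b$ speedup. With this correction your argument matches the paper's.
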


The proof is by way of showing the existence of a waking array, as defined in Definition~\ref{schedule}, for a section length defined as $\varphi(i)=c\cdot (2^i/b) \lg n\lbl$. 
Note that Lemmas~\ref{zigzag} to~\ref{proper} as well as Lemma~\ref{l:t2}  hold for the current specification of function $\varphi$, as their proofs do not refer to the value of this function.
The following Lemma corresponds to Lemma~\ref{C64}, which was proved for $\varphi(i)= c\cdot 2^i\cdot i^{1/b} \log n$, while now we prove an analogous statement for $\varphi(i)=c\cdot (2^i/b) \lg n\lbl$.

\begin{lemma}
\label{C64-large}
There exists an $\omega$-light interval $[t_1, t_2]$, for some $1\leq \omega \leq \log n$.
\end{lemma}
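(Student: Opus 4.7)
The plan is to mirror the structure of the proof of Lemma~\ref{C64}, adapting the counting argument to the new section-length function $\varphi(i)=c\cdot(2^{i}/b)\log n\cdot\llbl$ and to the modified light-interval condition \eqref{newcondition}.

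First I would invoke Lemma~\ref{proper} to obtain an $\omega$-balanced interval $[t_{1},t_{2}]$ with $1\le\omega\le\log k\le\log n$ on which \eqref{up} holds. The proofs of Lemmas \ref{zigzag}--\ref{proper} rely only on the growth property $\varphi(i+1)-\varphi(i)\ge\varphi(i)$, which holds with equality under the new $\varphi$, so they carry over unchanged. The chain of inequalities in \eqref{small} then applies verbatim and yields condition~(1) of Definition~\ref{d:light}, namely $\bigl|\bigcup_{i=1}^{\omega}W_{i}(j)\bigr|<2^{\omega+4}$ for every $j\in[t_{1},t_{2}]$, together with the identity $U(j)=\bigcup_{i=1}^{\omega}W_{i}(j)$ used below.

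The substantive step is to verify the modified condition~(2): at least $\varphi(\omega-2)$ time steps $j\in[t_{1},t_{2}]$ satisfy $1\le\Psi(j)\le 128\log n$. The lower bound $\Psi(j)\ge 1$ follows from Definition~\ref{bal}(a) exactly as in the original argument. For the upper bound I would argue by contradiction, letting $B\subseteq[t_{1},t_{2}]$ denote the set of time steps with $\Psi(j)>128\log n$; failure of the upper bound would force $|B|>\varphi(\omega-1)-\varphi(\omega-2)=\varphi(\omega-2)$, using $\varphi(\omega-1)=2\varphi(\omega-2)$ under the new $\varphi$. Repeating the section-length double counting yields $\varphi(i)\cdot\max_{j\in[t_{1},t_{2}]}|U(j)|\ge\sum_{j\in B}|W_{i}(j)|$ for each $1\le i\le\omega$, and summing over $i$ after substituting the explicit form of $\varphi(i)$ gives
\[
\omega\cdot\max_{j\in[t_{1},t_{2}]}|U(j)|\;\ge\;\frac{b}{c\log n\cdot\llbl}\sum_{j\in B}\Psi(j)\;>\;\frac{128\,b\cdot|B|}{c\cdot\llbl}\,.
\]
Plugging in the lower bound on $|B|$ and simplifying gives $\max_{j}|U(j)|>32\cdot 2^{\omega}\log n/\omega\ge 2^{\omega+4}$, where the last step uses $\omega\le\log n$. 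This contradicts \eqref{small}, so condition~(2) must hold and $[t_{1},t_{2}]$ is $\omega$-light.

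The main obstacle is verifying that the arithmetic still closes after swapping the factor $i^{1/b}$ in the denominator of the original $\varphi$ for the constant $b/\llbl$: the key identity is $\sum_{i=1}^{\omega}|W_{i}(j)|/\varphi(i)=b\,\Psi(j)/(c\log n\cdot\llbl)$, and the bound $\Psi(j)>128\log n$ must be just strong enough to force $|U(j)|>2^{\omega+4}$. It is, precisely because the assumption $\omega\le\log n$ absorbs the factor $\log n/\omega\ge 1$ picked up at the last step of the computation, converting the weaker per-stage bound $128\log n$ (in place of the old $128\omega$) into the same contradiction with \eqref{small}.
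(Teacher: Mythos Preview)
Your proposal is correct and follows essentially the same approach as the paper's proof. The only cosmetic difference is that you sum the double-counting inequality over $1\le i\le\omega$ (giving a left side of $\omega\cdot\max_j|U(j)|$ and requiring the closing observation $\omega\le\log n$), whereas the paper sums over $1\le i\le\lg n$, so that the $\log n$ factor cancels directly in the final arithmetic without that extra step.
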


\begin{proof}
Let $[t_1,t_2]$ be an $\omega$-balanced interval, which exists by Lemma~\ref{proper}. 
By that very Lemma, we can assume that every $j \in [t_1,t_2]$ satisfies condition~\eqref{up}.
Moreover, since the interval is $\omega$-balanced, we also have that $|W_{\omega}(j)| \leq 2^{\omega+2}$ for every $j \in [t_1,t_2]$, by condition (a) of Definition~\ref{bal}. 
We conclude with the following upper bound, for every $j \in [t_1,t_2]$:
\begin{eqnarray}\label{small-large}
   \Bigl| \bigcup_{i=1}^{\omega} W_i(j) \Bigr| &=& \Bigl| \bigcup_{i=1}^{\omega-1} W_i(j) \Bigr| + |W_{\omega}(j)| \nonumber \\
                                            &<& 3|W_{\omega}(j)| + |W_{\omega}(j)| \nonumber \\
                                            &\leq& 4\cdot 2^{\omega+2} = 2^{\omega+4}
                                            \ .
\end{eqnarray}
This proves condition~(1) of Definition~\ref{d:light}.

Next we prove condition~(2).
By condition~(a) of Definition~\ref{bal}, we know that $|W_{\omega}(j)| \geq 2^{\omega}$ for every 
$j \in [t_1, t_2]$. 
Therefore, the following bounds hold for every $j \in [t_1, t_2]$:
\[
          \Psi(j) \geq \frac{|W_{\omega}(j)|}{2^{\omega}} \geq 1
          \ .
\] 

What remains to show is the upper bound of~\eqref{newcondition}.
Suppose, to arrive at a contradiction, that the number of time steps $j$ in $[t_1, t_2]$, that satisfies 
the rightmost inequality of condition~\eqref{newcondition}, is less than $\varphi(\omega-2)$. 
Let $B \subseteq [t_1, t_2]$ be the set of balanced time steps $j \in [t_1, t_2]$ such that
condition~\eqref{newcondition} is not satisfied. 
By the assumption, the following is the case:
\begin{equation}\label{contra-large}
  |B| > |[t_1,t_2]| - \varphi(\omega-2) = \frac{\varphi(\omega-2)}{2}.
\end{equation}
For any $j \in [t_1, t_2]$, let us consider
\[
U(j) = \bigcup_{i=1}^{\lg n} W_{i}(j) = \bigcup_{i=1}^{\omega} W_{i}(j)
\ ,
\] 
where the second identity follows by condition~(b) of Definition~\ref{bal}.
By the specification of the array, any station belongs to section~$i$ during $\varphi(i+1) - \varphi(i) \geq \varphi(i)$ time steps, for $1\leq i\leq \lg n$.
Therefore we have the following bounds
\[
   \varphi(i)\max_{t_1 \leq j\leq t_2} |U(j)| \geq \sum_{j=t_1}^{t_2} |W_{i}(j)|
                                          \geq \sum_{j\in B} |W_{i}(j)|
\]
for every $1\leq i\leq \lg n$.
This in turn allows to obtain the following bound:
\begin{eqnarray*}
  \sum_{i=1}^{\lg n} \max_{t_1 \leq j\leq t_2} |U(j)| &\geq& \sum_{i=1}^{\lg n} \sum_{j\in B} \frac{|W_{i}(j)|}{\varphi(i)} \\
                                                   &  = & \sum_{j\in B} \sum_{i=1}^{\lg n} \frac{|W_{i}(j)|}{\varphi(i)} \\
                                                   &  = & \frac{1}{c(\lg n\lbl)/b} \sum_{j\in B} \sum_{i=1}^{\lg n} \frac{|W_{i}(j)|}{2^i} \\
                                                  &  > & \frac{1}{c(\lg n\lbl)/b}  \sum_{j\in B} 128\cdot\log n \\
                                                   &  = & \frac{128 b |B|}{c\lbl} \ .
\end{eqnarray*}
This gives the following estimate:
\[
   \max_{t_1 \leq j\leq t_2} |U(j)| > \frac{128 b |B|}{c\lg n\lbl} \ .
\]
By applying~\eqref{contra-large}, we obtain
\[
   \max_{t_1 \leq j\leq t_2} |U(j)| > \frac{128 b \cdot c (2^{\omega-3}/b)\lg n\lbl}{c\lg n\lbl} = 2^{\omega+4} \ .
\] 
This implies that there exists $j'\in [t_1, t_2]$ such that 
\[
  \Bigl| \bigcup_{i=1}^{\omega} W_{i}(j') \Bigr| > 2^{\omega+4} \ ,
\]
which contradicts \eqref{small-large}.
\end{proof}

\begin{lemma}
\label{isolating-large}

Let $\beta$ be a channel, for $1\leq \beta \leq b$. 
Let every station be executing the randomized algorithm as represented by a modified randomized transmission array.
The probability that there exists a station $w\in W(t)$ that is $\beta$-isolated  at any time step $j\leq t$ is at least 
\[
\Psi(j)\cdot b\cdot 2^{-\beta^*} \cdot 4^{-\Psi(j)\cdot b\cdot 2^{-\beta^*}} \ .
\]
\end{lemma}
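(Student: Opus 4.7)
The plan is to mirror the structure of the proof of Lemma~\ref{isolating}, substituting the transmission probabilities prescribed by the modified randomized transmission array (namely $b\cdot 2^{-i-\beta^*}$ in place of $2^{-i}\cdot i^{-\beta/b}$) and exploiting the fact that the new probabilities factor into a product depending on $i$ and a factor $b\cdot 2^{-\beta^*}$ that does not depend on~$i$. This factorization is the crucial structural difference: in the regular case the exponent factor $i^{-\beta/b}$ tangled together the section index with the channel index and forced a somewhat lossy upper estimate of $\Psi(j)/\log^{\beta/b}k$ on the sum, while here the channel-dependent part pulls out cleanly.

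First I would define, for each section index~$i$, the events
$E_1(\beta,i,j) = $ ``there is $w\in W_i(j)$ with $T(w,\beta,j)=1$'' and
$E_2(\beta,i,j) = $ ``$T(u,\beta,j)=0$ for every $l\neq i$ and every $u\in W_l(j)$.''
A station is $\beta$-isolated at time $j$ iff some $E_1(\beta,i,j)\cap E_2(\beta,i,j)$ occurs, and these intersections are mutually exclusive as $i$ varies. Next, using the Bernoulli parameter $b\cdot 2^{-i-\beta^*}$ prescribed by the modified array and the standard inequality $1-(1-x)^m \geq mx(1-x)^{m-1}$, I would show
\[
\Pr(E_1(\beta,i,j)) \;\geq\; |W_i(j)|\cdot b\cdot 2^{-i-\beta^*}\cdot \bigl(1-b\cdot 2^{-i-\beta^*}\bigr)^{|W_i(j)|}
\]
and, by independence across sections,
\[
\Pr(E_2(\beta,i,j)) \;=\; \prod_{l\neq i}\bigl(1-b\cdot 2^{-l-\beta^*}\bigr)^{|W_l(j)|}.
\]
Multiplying and putting the $l=i$ factor back into the product gives
\[
\Pr(E_1\cap E_2) \;\geq\; |W_i(j)|\cdot b\cdot 2^{-i-\beta^*}\cdot\prod_{l=1}^{\log n}\bigl(1-b\cdot 2^{-l-\beta^*}\bigr)^{|W_l(j)|}.
\]

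The key step is then to apply the bound $(1-x)^{1/x}\geq 1/4$ (valid when $x\leq 1/2$, which holds here since $b\cdot 2^{-l-\beta^*}$ is a probability that, in the regime of interest, never exceeds $1/2$). This yields
\[
\prod_{l=1}^{\log n}\bigl(1-b\cdot 2^{-l-\beta^*}\bigr)^{|W_l(j)|}
\;\geq\; 4^{-\sum_{l=1}^{\log n} b\cdot 2^{-l-\beta^*}|W_l(j)|}
\;=\; 4^{-b\cdot 2^{-\beta^*}\Psi(j)},
\]
where in the last equality I use the light-interval property (so $W_l(j)=\emptyset$ for $l>\log k$) together with the definition of $\Psi(j)=\sum_{l=1}^{\log k}|W_l(j)|/2^l$; this is precisely the place where the clean factorization $b\cdot 2^{-\beta^*}\cdot\Psi(j)$ appears. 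Finally, summing over $i$ (mutual exclusivity) and again restricting to $i\leq \log k$,
\[
\sum_{i=1}^{\log k}|W_i(j)|\cdot b\cdot 2^{-i-\beta^*}
\;=\; b\cdot 2^{-\beta^*}\cdot\Psi(j),
\]
so the union bound gives the claimed lower bound $\Psi(j)\cdot b\cdot 2^{-\beta^*}\cdot 4^{-\Psi(j)\cdot b\cdot 2^{-\beta^*}}$.

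The main obstacle I anticipate is the bookkeeping around boundary values of $l$ and $\beta^*$: one has to verify that $b\cdot 2^{-l-\beta^*}\leq 1/2$ in the relevant range so that $(1-x)^{1/x}\geq 1/4$ is legal, and that treating the missing $l=i$ factor by reinserting it into the full product costs nothing more than a constant that is already absorbed. Everything else is a direct transcription of the proof of Lemma~\ref{isolating} with the cleaner identity $\sum_l b\cdot 2^{-l-\beta^*}|W_l(j)|=b\cdot 2^{-\beta^*}\Psi(j)$ replacing the inequality $\sum_l |W_l(j)|/(2^l l^{\beta/b}) \leq \Psi(j)/\log^{\beta/b}k$ that was used there.
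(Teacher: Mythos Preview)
Your proposal is correct and follows essentially the same approach as the paper's own proof: define the events $E_1(\beta,i,j)$ and $E_2(\beta,i,j)$, bound $\Pr(E_1)$ by the first-moment inequality, multiply in $\Pr(E_2)$, apply $(1-x)^{1/x}\geq 1/4$ to convert the product into $4^{-b\cdot 2^{-\beta^*}\Psi(j)}$, and then sum over $i$ by mutual exclusivity to pull out the factor $b\cdot 2^{-\beta^*}\Psi(j)$. The only minor deviation is that you invoke the light-interval property to truncate at $\log k$, whereas the paper (and the lemma as stated) does not assume a light interval and simply identifies $\sum_{l=1}^{\log n}|W_l(j)|/2^l$ with $\Psi(j)$ directly; this is cosmetic rather than substantive.
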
 

\begin{proof}
Let $E_1(\beta, i,j)$ be the event ``there exists $w\in W_{i}(j)$ such that $T(\beta, w, j) = 1$'', and
let $E_2(\beta, i,j)$ be the event ``$T(u,\beta, j) = 0$ for all $l$ with $l\not = i$ and for every $u\in W_{l}(j)$.'' 
Let us say that $W(t)$ is $\beta$-isolated at time step $j\leq t$  if and only if there exists a station
$w\in W(t)$ that is $\beta$-isolated at time step $j$.
Clearly, $W(t)$ is $\beta$-isolated at time $j$ if and only if the following event occurs:
\[ 
    \bigcup_{i=1}^{\log n} \bigl(E_1(\beta, i,j) \cap E_2(\beta, i,j)\bigr)
    \ .
\] 
We use the following inequalities
\begin{eqnarray*}
\Pr (E_1(\beta, i,j) )
&\geq&
 |W_{i}(j)|\cdot b \cdot 2^{-i -\beta^*} \Bigl(1-b \cdot 2^{-i -\beta^*}\Bigr)^{|W_{i}(j)|-1} \\
 &\geq &
 |W_{i}(j)|\cdot b \cdot 2^{-i -\beta^*} \Bigl(1-b \cdot 2^{-i -\beta^*}\Bigr)^{|W_{i}(j)|}  
\end{eqnarray*}
combined with the following identity
\[
\Pr (E_2(\beta, i,j) ) = \prod_{l=1,l \not = i}^{\log n} \left(1-b \cdot 2^{-l -\beta^*}\right)^{|W_{l}(j)|} \ .
\]
Events $E_1(\beta, i,j)$ and $E_2(\beta, i,j)$ are  independent. 
It follows that
\begin{eqnarray*}
\Pr (E_1(\beta, i,j) \cap E_2(\beta, i,j) ) 
& \geq & 
		|W_{i}(j)|\cdot b \cdot 2^{-i -\beta^*}  \prod_{l = 1}^{\log n}
				\left(1-b \cdot 2^{-l -\beta^*}\right)^{|W_{l}(j)|}
		 \\
		        &=& |W_{i}(j)|\cdot b \cdot 2^{-i -\beta^*}  \prod_{l = 1}^{\log n}
			\left(1-b \cdot 2^{-l -\beta^*}\right)^{(2^{l +\beta^*}/b) \cdot b|W_{l}(j)|2^{-l-\beta^*}}
			\\
		        &\geq& |W_{i}(j)|\cdot b \cdot 2^{-i -\beta^*}  \cdot
			      {4}^{-\sum_{l = 1}^{\log n} (b|W_{l}(j)|2^{-l-\beta^*})} \\
			      &=& \frac{|W_{i}(j)|}{2^{i}}\cdot (b\cdot 2^{-\beta^*}) \cdot{4}^{-\Psi(j)\cdot (b\cdot 2^{-\beta^*})} \ .
\end{eqnarray*}
To conclude, observe that the events $E_1(\beta, i,j) \cap E_2(\beta, i,j)$ are mutually exclusive, for any~$j$ and $1 \leq i \leq \lg n$.
\end{proof}

\begin{lemma}
\label{probability-large}

Let every station be executing the randomized algorithm as represented by a modified randomized transmission array.
There exists an $\omega$-light interval $[t_1, t_2]$, for some $1\leq \omega \leq \log n$, that
contains at least $\varphi(\omega-2)$ time steps $j\in [t_1, t_2]$ such that  in each of these steps the number of channels~$\beta$, with the probability of a $\beta$-isolated station being at least $1/8$, is at least $\bigl\lfloor\frac{b}{\lbl}\bigr\rfloor$.
\end{lemma}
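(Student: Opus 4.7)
The plan is to mirror the structure of Lemma~\ref{probability}, but replacing the invocation of Lemma~\ref{isolating} with its modified-array counterpart, Lemma~\ref{isolating-large}. First I would apply Lemma~\ref{C64-large} to obtain an $\omega$-light interval $[t_1,t_2]$ for some $1\le \omega \le \log n$. By the modified form of condition~(2) in Definition~\ref{d:light}, namely~\eqref{newcondition}, this interval contains at least $\varphi(\omega-2)$ time steps $j$ satisfying $1 \le \Psi(j) \le 128\log n$. Call this collection of time steps~$T$. The task then reduces to showing that for every $j \in T$, the number of channels $\beta$ for which the $\beta$-isolation probability at step~$j$ is at least~$1/8$ is no less than $\bigl\lfloor b/\lbl\bigr\rfloor$.

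Fix $j \in T$. By Lemma~\ref{isolating-large}, the lower bound on the $\beta$-isolation probability has the form $f(x_\beta)$, where $f(y) = y \cdot 4^{-y}$ and $x_\beta = \Psi(j)\cdot b \cdot 2^{-\beta^*}$. A one-variable analysis of the unimodal $f$ gives $f(y) \ge 1/8$ for all $y \in [1,2]$, since $f(1) = 1/4$, $f(2) = 1/8$, and $f$ is decreasing on $[1/\ln 4,\infty)$. Hence it suffices to exhibit a single residue value $\beta^* \in \{0,1,\ldots,\lbl-1\}$ such that $x_\beta \in [1,2]$: every channel $\beta \in [1,b]$ whose residue modulo $\lbl$ equals this $\beta^*$ is then a good channel, and elementary counting gives at least $\bigl\lfloor b/\lbl\bigr\rfloor$ such channels.

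To produce such a $\beta^*$, I would set $\beta^* = \bigl\lceil \log(b\Psi(j)/2)\bigr\rceil$, which by the definition of the ceiling forces $x_\beta \in (1,2]$. Nonnegativity $\beta^* \ge 0$ is immediate from $b\Psi(j) \ge 2$, which holds because $b \ge 2$ (ensured by $b > \lbl > 1$) and $\Psi(j) \ge 1$. The upper bound $\beta^* \le \lbl - 1$ is equivalent to $b\Psi(j) \le 2^{\lbl} = 128\,b\log n$, i.e.\ $\Psi(j) \le 128 \log n$, which is exactly the right-hand inequality in~\eqref{newcondition}. The main and essentially only subtle point of the proof is this tight numerical matching between the constant~$128$ appearing in~\eqref{newcondition} and the constant appearing in the definition of~$\lbl$: this coincidence is what guarantees that the chosen residue always falls within the admissible range $\{0,1,\ldots,\lbl-1\}$, and in hindsight explains the choice of these constants in the setup.
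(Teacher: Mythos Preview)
Your proposal is correct and follows essentially the same route as the paper. The paper organizes the argument by partitioning the good time steps into classes $T_q=\{j\in T: 2^q< b\Psi(j)\le 2^{q+1}\}$ and then, for $j\in T_q$, taking all channels with residue $\beta^*=q$; your choice $\beta^*=\lceil\log(b\Psi(j)/2)\rceil$ is exactly the index~$q$ of the class containing~$j$, so the two arguments coincide up to presentation.
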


\begin{proof}
By Lemma~\ref{C64-large}, there are at least $\varphi(\omega-2)$ time steps $j\in [t_1, t_2]$ such 
that the inequalities $1 \leq \Psi(j) \leq 128\cdot \lg n$ hold. 
Let $T$ be the set of such time steps.
Let us define a partition of $T$ into sets 
\[
T_{q} = \{ j \in T |\ 2^{q} < b\cdot \Psi(j) \leq 2^{q+1} \}
\ , 
\]
for $0\le q< \lbl$. 
It suffices to show that for every time step $t\in T$ there exists $\bigl\lfloor\frac{b}{\lbl}\bigr\rfloor$ channels $\beta$, for $1\leq \beta \leq b$, such that the probability of $\beta$-isolating a station $w\in W(t)$ at time $t$ is at least~$1/8$.

Let us take any time step $t\in T$, so that $t \in T_q$ for some $0\leq q<\lbl$.
By Lemma~\ref{isolating-large}, if $t \in T_q$ then for each $\beta$ such that $\beta^*=q$,
the probability that a station is $\beta$-isolated at time step~$t$ is at least
\[
\Psi(j)\cdot b\cdot 2^{-\beta^*} \cdot 4^{-\Psi(j)\cdot b\cdot 2^{-\beta^*}} 
\ge
\left(2^{q+1}\cdot 2^{-q}\right) \cdot 4^{-2^{q+1}\cdot 2^{-q}} 
=
1/8
\ ,
\]
where we use the fact that the function $x\cdot 4^{-x}$ is monotonically decreasing in~$x$.
To conclude, notice that there are at least $\bigl\lfloor\frac{b}{\lbl}\bigr\rfloor$ channels~$\beta$ satisfying $\beta^*=q$, for any given $q$ such that $0\le q< \lbl$.
\end{proof}

\begin{lemma}
\label{l:fraction-large}
Let $c$ in the definition of $\varphi$ be bigger than some sufficiently large constant.
There exists a waking array  of length $\frac{2cn}{b}\log n\lbl$
such that for any activation pattern, there is an integer $0\le\omega\le\log n$ with the following properties: 
\begin{enumerate}
\item[\rm (1)] 

There are at least $c\cdot 2^{\omega-6}\log n$ isolated positions by time step $c\cdot (2^{\omega+1}/b)\log n\lbl$,

\item[\rm(2)] 
These positions occur at time step with at least $2^{\omega}$ but no more than $2^{\omega+4}$ activated stations. 
\end{enumerate}
\end{lemma}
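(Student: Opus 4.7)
The plan is to follow the same probabilistic-method skeleton as in the proof of Lemma~\ref{l:fraction}, but using Lemma~\ref{probability-large} in place of Lemma~\ref{probability} and the modified randomized transmission array (the one appropriate for $n$-many channels) in place of the regular one.

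First I would fix any activation pattern and, by Lemma~\ref{probability-large} applied to the modified randomized array, pick the smallest $\omega\le\log n$ together with an $\omega$-light interval $[t_1,t_2]$ containing at least $\varphi(\omega-2)$ time steps in each of which at least $\lfloor b/\lbl\rfloor$ distinct channels independently achieve a $\beta$-isolated station with probability at least $1/8$. Since $b>\lbl$, we have $\lfloor b/\lbl\rfloor\ge b/(2\lbl)$, so the expected number of isolated positions contributed by one such time step is at least $b/(16\lbl)$. Summing over the $\varphi(\omega-2)=c\cdot(2^{\omega-2}/b)\log n\cdot\lbl$ good time steps, the expected total number of isolated positions in $[t_1,t_2]$ is at least
\[
\frac{c\cdot 2^{\omega-2}\log n\cdot\lbl}{b}\cdot\frac{b}{16\lbl}
= c\cdot 2^{\omega-6}\log n\cdot 4\,,
\]
which comfortably exceeds the target $c\cdot 2^{\omega-6}\log n$. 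A standard Chernoff bound, applied to these independent events across rounds and across the qualifying channels, gives that the total count falls below $c\cdot 2^{\omega-6}\log n$ with probability at most $\exp(-\Omega(c\cdot 2^{\omega}\log n))$.

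Next I would bound the number of activation patterns associated with a given $\omega$-light interval $[t_1,t_2]$, exactly as in the proof of Lemma~\ref{l:fraction}. By Lemma~\ref{l:t2}, $t_2\le\varphi(\omega+1)= c\cdot(2^{\omega+1}/b)\log n\lbl$, and at most $2^{\omega+4}$ stations are activated by time $t_2$, so the number of such activation patterns is at most
\[
\binom{n}{2^{\omega+4}}(t_2)^{2^{\omega+4}}
\le \exp\bigl(O(2^{\omega+4}\log n)\bigr),
\]
which can be absorbed by $\exp(\Omega(c\cdot 2^{\omega}\log n))$ for $c$ large enough. A union bound over the activation patterns in this class, combined with a further union bound over the at most $\bigl(\tfrac{2cn}{b}\log n\lbl\bigr)^2$ candidate pairs $(t_1,t_2)$ and the at most $\log n$ values of $\omega$, shows that the probability that some activation pattern fails the desired isolated-position count for its associated $\omega$-light interval is strictly smaller than~$1$. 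By the probabilistic method there exists a fixed (deterministic) instantiation of the modified array for which Claim~(1) holds simultaneously for every activation pattern, and its length is $\varphi(\log n)=O\bigl(\frac{n}{b}\log n\lbl\bigr)$.

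Finally, Claim~(2) is immediate: since the isolated positions occur inside the associated $\omega$-light interval, which is $\omega$-balanced, Definitions~\ref{bal}, \ref{baltime}, and~\ref{d:light} together force between $2^{\omega}$ and $2^{\omega+4}$ activated stations throughout that interval. The main obstacle is the bookkeeping needed to make sure that the Chernoff slack $\exp(-\Omega(c\cdot 2^{\omega}\log n))$ dominates both the number of activation patterns associated with the interval and the polynomial number of candidate intervals $(t_1,t_2,\omega)$; this is precisely where the factor $b$ in the denominator of $\varphi(i)$ is paid for by the factor $b$ appearing in the per-step expected isolation count from Lemma~\ref{probability-large}, so the two effects cancel and the same constant $c$ can be chosen as in Lemma~\ref{l:fraction}.
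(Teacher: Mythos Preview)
Your proposal follows essentially the same probabilistic-method argument as the paper's proof: modified randomized array, Lemma~\ref{probability-large} to lower-bound the expected number of isolated positions in an $\omega$-light interval, Chernoff, then a union bound over activation patterns and candidate intervals. One arithmetic slip to fix: with your floor estimate $\lfloor b/\lbl\rfloor\ge b/(2\lbl)$ the product $\varphi(\omega-2)\cdot \tfrac{b}{16\lbl}$ equals $c\cdot 2^{\omega-6}\log n$, not $4$ times that, so the expectation only \emph{meets} rather than exceeds the target and the Chernoff step would not give the stated threshold; the paper avoids this by using $\lfloor b/\lbl\rfloor$ directly to get expectation $c\cdot 2^{\omega-5}\log n$, twice the target, which is what the Chernoff bound needs.
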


\begin{proof}
Let us consider a modified randomized  transmission array.
Let us assume that $c>0$    in the specification 
\[
\varphi(i)= c\cdot (2^{i}/b) \log n \lbl\ 
\]
is sufficiently large, for any $1\le i\le \log n$. 
Observe that the length of the schedules is not bigger than 
\[
\varphi(i+1) \le \frac{2cn}{b}\log n\lbl\ .
\]
Let us consider an activation pattern with the first activation at time step~$0$.
By Lemma~\ref{probability-large}, there is $\omega\le \log n$ and $\omega$-light interval $[t_1,t_2]$ such that
there are at least $\varphi(\omega-2)$ time steps $j\in [t_1, t_2]$ such that each of them has
at least $\Bigl\lfloor\frac{b}{\lbl}\Bigr\rfloor$ channels $\beta$ with
the probability of $\beta$-isolation of a station $w\in W(j)$ being at least $1/8$.
We choose the smallest such an~$\omega$ and associate the corresponding $\omega$-light interval $[t_1,t_2]$
with the activation pattern.
We can clearly partition all activation patterns into disjoint classes based on the intervals associated
with them.

Observe that the expected number of isolated positions in the $\omega$-light interval $[t_1,t_2]$ is at least 
\[
\varphi(\omega-2) \cdot \left\lfloor\frac{b}{\lbl}\right\rfloor \cdot \frac{1}{8}
=
c\cdot 2^{\omega -5} \log n
\ .
\]
By the Chernoff bound, the probability that the number of isolated positions is smaller than
$c\cdot 2^{\omega -6} \log n$ is at most $\exp(-c\cdot 2^{\omega -8} \log n)$.

In order to apply the  probabilistic-method argument to the class of activation patterns associated with
the $\omega$-light time interval $[t_1,t_2]$, it remains to estimate from above the number of all such activation patterns.
By Lemma~\ref{l:t2}, the rightmost end~$t_2$ of this time interval is not bigger than 
\[
\sum_{i=1}^\omega \varphi(i) \le c\cdot (2^{\omega+1}/b)\log n\llbl\  .
\]
Next observe that since $[t_1,t_2]$ is $\omega$-light, there are no more than 
$2^{\omega+4}$ stations activated by time step~$t_2$. 
Hence, the number of different activation patterns in the class associated with the 
$\omega$-light interval $[t_1,t_2]$
is at most 
\begin{eqnarray*}
{n \choose 2^{\omega+4}} \left(t_2\right)^{2^{\omega+4}} 
&\le&
\left( \frac{ne}{2^{\omega+4}}\right)^{2^{\omega+4}} \left(c\cdot (2^{\omega+1}/b)\log n \llbl \right)^{2^{\omega+4}} \\
&=&
\exp\left(2^{\omega+4}\cdot \ln ((ce/8)\cdot (n/b))\log n\llbl)\right)\\
&\le&
\exp\left(3\ln c \cdot 2^{\omega+4}\cdot \log n\right)
\ ,
\end{eqnarray*}
which is smaller than $\exp(c\cdot 2^{\omega -8} \log n - 4\log(2cn\log n))$, for a sufficiently large constant~$c$.
Next we combine the following two bounds:
\begin{itemize}
\item
the upper bound $\exp(c\cdot 2^{\omega -8} \log n - 4\log(2cn\log n))$ on the number all activation patterns in the class associated with the $\omega$-light time interval $[t_1,t_2]$, with
\item
the upper bound $\exp(-c\cdot 2^{\omega -8} \log n)$ on the probability that for any fixed such an activation pattern the number of isolated positions is smaller than $c\cdot 2^{\omega -6} \log n$.
\end{itemize}
This allows to conclude that the probability of the event that there is an activation pattern associated with the $\omega$-light time interval $[t_1,t_2]$ with less than $c\cdot 2^{\omega -6} \log n$ isolated positions is smaller than 
\[
\exp(c\cdot 2^{\omega -8} \log n - 4\log(2cn\log n)) \cdot 
\exp(-c\cdot 2^{\omega -8} \log n)
=
\exp(- 4\log(2cn\log n))
\ .
\]
There are at most $2cn\log n$ candidates for time step $t_1$ and also for $t_2$, by Lemma~\ref{l:t2} applied to $\varphi(\omega)=c(2^\omega/b)\log n\lbl$ 
and the bound $\omega\le \log n$.
We apply the union bound to these events over all such feasible intervals.
This gives that the probability that there is $\omega\le \log n$ and an activation pattern associated with some  $\omega$-light time interval $[t_1,t_2]$ with less than $c\cdot 2^{\omega -6} \log n$ isolated positions  is smaller than 
\[
\exp(- 4\log(2cn\log n))
\cdot 
(2cn\log n)^2
<
1/n^2 \leq 1
\ .
\]
Thus, by the probabilistic-method argument, there is an instantiation of the random array, which is a deterministic array, for which the complementary event holds.
Hence, this array satisfies claim~(1) of this Lemma with respect to any activation pattern.
Claim~(2) follows when one observes  that these occurrences of isolated positions
take place in the corresponding $\omega$-light interval, which by definition has no more than 
$2^{\omega+4}$ stations activated by its end, and at least $2^{\omega}$ activated stations
in its beginning.
This is because $\omega$-light interval is by definition an $\omega$-balanced interval,
according to Definitions~\ref{bal}, \ref{baltime} and~\ref{d:light}.
\end{proof}

\Paragraph{Proof completed.}

We conclude with the proof of Theorem~\ref{thm:many-channels}.
There is an isolated position by time $\cO(\frac{k}{b}\log n\llbl)$  for every activation pattern.
This follows from point (1) of Lemma~\ref{l:fraction-large}.
Indeed, otherwise the $\omega$-light interval, which is also $\omega$-balanced, would have
at least $2^\omega > k$ stations activated, by Definitions~\ref{bal} and~\ref{baltime}, contrary to the assumptions.

\Paragraph{Channels with random jamming.}

We also consider a model of random jamming of channels for the case of sufficiently many channels.
Let us assume that at each time step and on every channel, a jamming error occurs with the probability~$p$, where $0\le p<1$, independently over time steps and channels. 
The case $p=0$ is covered by Theorem~\ref{thm:general-deterministic}.

\begin{theorem}
\label{matrix-failures-large}

For a given error probability~$p$, where $0< p<1$, if the numbers of channels~$b$ and nodes~$n$ satisfy the inequality $b>\lbl$, then there exists a waking array of  length $\cO(\log^{-1}(\frac{1}{p})\, \frac{n}{b}\log n\log(b\log n))$ providing wake-up in  time $\cO(\log^{-1}(\frac{1}{p}) \, \frac{k}{b} \log n\log(b\log n))$, for any number $k\le n$ of spontaneously activated stations, with a probability that is at least $1 - 1/\text{\emph{poly}}(n)$.
\end{theorem}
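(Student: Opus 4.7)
The plan is to mirror almost verbatim the jamming analysis used in the proof of Theorem~\ref{thm:randomized-jamming}, with the only change being that we invoke Lemma~\ref{l:fraction-large} in place of Lemma~\ref{l:fraction}. Concretely, I would set the constant $c$ in the definition $\varphi(i) = c \cdot (2^{i}/b) \lg n \lbl$ to $c = c' \cdot \log^{-1}(1/p)$ for a sufficiently large absolute constant $c'$. Fix an arbitrary activation pattern. By Lemma~\ref{l:fraction-large}, there exists an integer $\omega$ with $0 \le \omega \le \log n$ such that the transmission array admits at least $c \cdot 2^{\omega - 6} \log n$ isolated positions by time step $c \cdot (2^{\omega+1}/b) \log n \lbl$, and by that time at least $2^{\omega}$ and at most $2^{\omega + 4}$ stations are activated.

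Next I would carry out the jamming estimate. Because jamming events are independent across time steps and channels, the probability that every one of the guaranteed isolated positions is jammed is at most
\[
p^{c \cdot 2^{\omega - 6} \log n} = \exp\!\left( c' \cdot \log^{-1}(1/p) \cdot 2^{\omega - 6} \log n \cdot \ln p \right).
\]
Since $\ln p / \log(1/p) = -\ln 2$ for every $0 < p < 1$, the exponent equals $-c' \ln 2 \cdot 2^{\omega - 6} \log n$, which for sufficiently large $c'$ makes the probability smaller than $1/\mbox{poly}(n)$. A union bound over the $O(\log n)$ candidate values of $\omega$ preserves the inverse-polynomial failure probability. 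Hence with probability $1 - 1/\mbox{poly}(n)$ some guaranteed isolated position survives, which by the definition of a waking array produces a successful wake-up.

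For the time and length bounds, I would use Lemma~\ref{l:fraction-large}(2): the isolated positions lie in an $\omega$-light interval in whose prefix at least $2^{\omega}$ stations have been activated, hence $2^{\omega} \le k$. Substituting into the duration $c \cdot (2^{\omega+1}/b) \log n \lbl$ yields the claimed wake-up time $\cO\bigl(\log^{-1}(1/p) \cdot \frac{k}{b} \log n \log(b \log n)\bigr)$, and the same argument with $2^{\omega} \le n$ gives the array length $\cO\bigl(\log^{-1}(1/p) \cdot \frac{n}{b} \log n \log(b \log n)\bigr)$.

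I do not expect any genuine obstacle in this proof, since all of the structural content of the argument has already been absorbed into Lemma~\ref{l:fraction-large}. The only point requiring minor care is to verify that scaling $c$ by the factor $\log^{-1}(1/p)$ does not invalidate the probabilistic method argument in Lemma~\ref{l:fraction-large}, where $c$ needed only to exceed an absolute constant; this is resolved by taking $c'$ larger than both that absolute constant and the constant needed above to kill jamming.
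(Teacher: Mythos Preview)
Your proposal is correct and follows essentially the same approach as the paper's own proof: set $c=c'\cdot\log^{-1}(1/p)$, invoke Lemma~\ref{l:fraction-large} to obtain the guaranteed isolated positions, and bound the probability that jamming kills all of them. The union bound over~$\omega$ you add is unnecessary (Lemma~\ref{l:fraction-large} already fixes a single $\omega$ per activation pattern), and the array length comes directly from the length stated in Lemma~\ref{l:fraction-large} rather than from substituting $2^{\omega}\le n$, but these are cosmetic points that do not affect correctness.
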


\begin{proof} 
Let us set $c=c'\cdot\log^{-1}\frac{1}{p}$, for a sufficiently large constant $c'$, and consider any activation pattern.
By Lemma~\ref{l:fraction-large}, $c\cdot 2^{\omega-6}\log n$ isolated positions occur by time step $c\cdot (2^{\omega+1}/b)\log n\llbl$ and by that time step no more than $2^{\omega+4}$ stations get  activated.
Each such an isolated position is jammed independently with probability~$p$.
Therefore, the probability that all these positions are jammed, and thus no successful transmission 
occurs by time 
\[
c\cdot (2^{\omega+1}/b)\log n\llbl = \cO(\log^{-1}\Bigl(\frac{1}{p}\Bigl) \, \frac{k}{b}\log n\log(b\log n))
\ ,
\]
is at least 
\[
p^{c\cdot 2^{\omega-6}\log n}
=
\exp\left(c'\cdot \log^{-1}\Bigl(\frac{1}{p}\Bigr) \cdot (2^{\omega-6}/b)\log n\cdot \ln p\right)
\ ,
\]
which is smaller than $1/\mbox{poly}(n)$ for sufficiently large constant $c'$.
Here we use the fact that $\frac{\ln p}{\log(1/p)}$ is negative for $0<p<1$.
When bounding  the time step of a successful wake-up to occur, we rely on the fact that  $2^{\omega}$, which is the lower bound on the number of activated stations  by Lemma~\ref{l:fraction-large}(2), must be smaller than~$k$.
\end{proof}

\section{Conclusion}

We considered waking up a multi-channel single-hop radio network by deterministic and randomized algorithms.
To assess optimality of a solution, we gave a lower bound $\frac{k}{4b}\lg \frac{n}{k} - \frac{k+1}{b}$ on time of a deterministic algorithm, which holds when both $k$ and $n$ are known.

This lower bound can be beaten by randomized algorithms when $k$ is known, as we demonstrated that a randomized algorithm exists that refers to $k$ and works in time $\cO(k^{1/b}\ln \frac{1}{\epsilon})$ with a large probability. 
This shows a separation between the best performance bounds of randomized and deterministic wake-up algorithms when the parameter~$k$ is known, even for just two channels.

We may interpret the parameters $k$ and $b$ as representing scalability of an algorithmic solution, by the presence of factors $k$ and $1/b$ in time-performance bounds.
This could mean that an algorithm that scales perfectly with $k$ and $b$ has time performance of the form $\cO(\frac{k}{b} \cdot f(n,b,k))$, for some function~$f(n,b,k)$ such that $f(n_0,b,k)=\cO(1)$ for any constant $n_0$ and the variables $b$ and $k$ growing unbounded.

Deterministic algorithms given in this paper are developed for the case when $n$ is known but $k$ is unknown.
Our general solution operates in time $\cO(k\log^{1/b} k\log n)$.
This means that $k\log^{1/b} k$ reflects scalability with $k$, which is close to linear in~$k$, while the scalability with~$b$ is poor, as $1/b$ is not a factor in the performance bound at all.
When sufficiently many channels are available, we show that  a multi-channel can be woken up deterministically in time $\cO(\frac{k}{b}\log n\log(b\log n))$. 
The respective algorithm is effective in two ways.
The first one is about time performance: the algorithm misses time optimality by at most a  poly-logarithmic factor that is $\cO(\log n(\log b +\log\log n))$, because of the lower bound~$\frac{k}{4b}\lg \frac{n}{k} - \frac{k+1}{b}$.
The second one is about scalability: the algorithm scales perfectly with the unknown~$k$, and also its scalability with~$b$ is~$\frac{\log b}{b}$, so it misses optimality in that respect by the factor of~$\log b$ only. 

\noindent
\textbf{Acknowledgement:} 
We want to thank the anonymous reviewers, of a manuscript that resulted in publishing~\cite{ChlebusDK16}, for comments that led to improving the submission; in particular, for pointing out that the lower bound from Clementi et al.~\cite{ClementiMS03} could be used to structure a short argument for a lower bound for multi channel networks.

\bibliographystyle{abbrv}
\bibliography{multi-channel-wakeup}

\end{document}